\documentclass[11pt]{article}
\usepackage[T1]{fontenc}
\usepackage{lmodern}
\usepackage[margin=1in]{geometry}
\usepackage{amsmath,amssymb,amsthm,mathtools}
\usepackage{booktabs,array,longtable}
\usepackage{graphicx,xcolor,tikz}
\usetikzlibrary{arrows.meta,positioning,calc,fit,backgrounds}
\usepackage{microtype}
\usepackage{needspace}
\usepackage{enumitem}
\usepackage[colorlinks=true,linkcolor=blue!45!black,citecolor=blue!45!black,urlcolor=blue!45!black]{hyperref}
\usepackage{url}
\newtheorem{theorem}{Theorem}
\newtheorem{lemma}[theorem]{Lemma}

\newtheorem{corollary}[theorem]{Corollary}
\theoremstyle{definition}
\newtheorem{definition}[theorem]{Definition}

\DeclareMathOperator{\conv}{conv}

\DeclareMathOperator{\lk}{lk}

\setlist{itemsep=3pt,topsep=5pt}
\title{Flip Graphs for Eight Points in Three Dimensions Are Connected}
\author{Marc Khoury\\[3pt]\small\href{mailto:khoury@eecs.berkeley.edu}{\nolinkurl{khoury@eecs.berkeley.edu}}}
\date{}
\begin{document}
\maketitle
\begin{abstract}
We prove that every configuration of four through eight points in
three-dimensional space, with no four points coplanar, has a connected
full geometric flip graph under $2 \leftrightarrow 3$ flips. Every point
remains fixed and present throughout the sequence. The proof brings each
tetrahedralization into placing form at a convex hull vertex: the
tetrahedra not incident to that vertex fill the convex hull of the
remaining points. This reduction allows us to establish connectivity by
induction, using the connectivity of regular tetrahedralizations. The
main geometric tool is radial projection, which turns the tetrahedra
incident to a hull vertex into a planar triangulation. When this planar
triangulation is regular, varying its lifting heights produces legal
spatial flips that progressively shrink the region occupied by the
incident tetrahedra and reach placing form. Planar lifting criteria and
compatibility constraints between the projections at different hull
vertices resolve the remaining small cases. For eight points, at most 35
flips are needed to reach placing form.
\end{abstract}
\section{Introduction}
A geometric flip replaces a tetrahedralization of a convex five-point
bipyramid by its other tetrahedralization. Although each replacement is
local, connectivity requires control of the surrounding mesh: a
combinatorial replacement need not have the correct geometric support.
We prove the following statement with every vertex retained.

\begin{theorem}[Complete eight-point connectivity]\label{thm:main}
For every finite $P\subset\mathbb R^3$ with $4\leq |P|\leq8$ and no four
distinct points coplanar, the graph $\mathcal F(P)$ of full geometric
tetrahedralizations using exactly $P$, with exact supported geometric
$2\leftrightarrow3$ circuit flips as edges, is connected.
\end{theorem}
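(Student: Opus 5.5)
The plan is to prove Theorem~\ref{thm:main} by induction on $n=|P|$, using placing form at a hull vertex as the normal form. For $n=4$ there is exactly one tetrahedralization, so there is nothing to prove. For $n\geq 5$, fix a convex hull vertex $v$ of $P$ and set $P'=P\setminus\{v\}$. A tetrahedralization $T$ of $P$ is \emph{in placing form at $v$} when the tetrahedra of $T$ not incident to $v$ form a tetrahedralization $T'$ of $\conv P'$. In that case the tetrahedra incident to $v$ are forced: they are the cones from $v$ over the faces of $\partial\conv P'$ visible from $v$. So a placing-form tetrahedralization is determined by $T'$.

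Any flip of $T'$ inside $\conv P'$ is also a supported flip of $T$, because its bipyramid lies in $\conv P'$ and the cone part is unaffected; only flips that touch boundary faces of $\conv P'$ need care, and those visible from $v$ do not arise as interior flips of $T'$. Hence, by induction, any two placing-form tetrahedralizations at $v$ are connected. The remaining task is to show that every $T$ can be flipped into placing form at some hull vertex, and that placing forms at different hull vertices $v,w$ are connected. For the latter, I would use regular tetrahedralizations. The regular (for instance placing/lexicographic) tetrahedralizations of $P$ are connected to each other through regular flips, via paths in the secondary polytope. I would pick a regular tetrahedralization that is simultaneously in placing form at $v$ and a regular one in placing form at $w$: a placing triangulation adding $v$ last, and one adding $w$ last. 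This links the normal forms at different vertices.

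The core step is the reduction to placing form. Let $\mathrm{St}(v)$ be the union of tetrahedra of $T$ incident to $v$. Radially projecting from $v$ gives a planar triangulation $\tau_v$ of the link $\lk(v)$, drawn on a plane separating $v$ from $P'$. If $\tau_v$ is regular, I assign lifting heights to its vertices and vary them continuously, tilting toward the heights induced by the far side of $\partial\conv P'$. Each time the lifted lower hull changes, I perform the corresponding spatial $2\to3$ or $3\to2$ flip between a tetrahedron at $v$ and a neighbouring tetrahedron beyond the link. Along the way I must check that each such flip is supported, meaning its five points are in convex position with the right bipyramid already present, and that $\mathrm{St}(v)$ shrinks monotonically. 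The process ends when the link coincides with the visible part of $\partial\conv P'$, which is exactly placing form. I would count the flips using the number of link vertices, at most $7$, to obtain a bound like the stated $35$.

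The main obstacle is the case where $\tau_v$ is not regular at any hull vertex $v$. Since the link has at most $7$ vertices, non-regular planar triangulations are rare: the smallest example is the six-point "mother of all examples" configuration and its relatives. I would try first to show that the projections at different hull vertices constrain each other. An edge $vw$ with $v,w$ both hull vertices appears in both links, and the local fans around that edge must match. Using these compatibility constraints, I would show that some hull vertex, for example one of low degree or one whose link avoids the twisted triple, has a regular projection. Failing that, I would perform a few preliminary flips away from $v$ to destroy the non-regular pattern, then apply the lifting argument. These residual cases would be handled by explicit case analysis on the combinatorial type of the link.
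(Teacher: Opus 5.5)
Your framework matches the paper's: induction on $|P|$, placing form at a hull vertex as the normal form, joining placing forms through the full regular locus, and a height sweep on a regular radial link that shrinks the star. The gap is the step you yourself call the main obstacle. You give no argument that some hull vertex has a strictly regular radial link, and no working procedure when none does. So the reduction to placing form, and with it the induction, is unproved already at $n=7$.

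The paper needs three concrete ingredients here:
\begin{enumerate}
\item A checkable sufficient condition for regularity: a triangulated disk whose interior vertices induce a forest is strictly regular (Theorem~\ref{thm:forest}).
\item A count that makes this condition apply. If every hull link had an interior cycle, then at each hull vertex $q$ we would have $i_T(q)\geq3$, and $m_T(q)\leq n-1$ gives $b(q)\leq n-4$. Summing hull degrees gives $(10-n)h\leq12$.
\begin{itemize}
\item This settles $n\leq6$, and $n=8$ with $h\geq7$.
\item At $n=7$ it forces $h=4$. That case is excluded because the interior triangle $abc$ would be a face of all four links and so have four tetrahedral cofaces.
\item At $n=8$, the cases $h=5,6$ need a shared-face argument and an auxiliary ear argument (Theorem~\ref{thm:hfiveplus}).
\end{itemize}
\item Two further ideas for a tetrahedral hull with four interior points.
\begin{itemize}
\item If a pivot misses an interior neighbor, a nonregular link there must be the octahedral disk. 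The positive gap $\eta_x-L(z_x)=\alpha_q/\lambda_x$ then selects one diagonal exchange that is a legal star-shrinking $2\to3$ flip and yields a regular link (Lemma~\ref{lem:octexchange}, Theorem~\ref{thm:octprep}).
\item If all sixteen hull--interior edges occur, three things together show that the four links cannot all be nonregular (Theorem~\ref{thm:compatibility}): the classification of nonregular four-interior disks, the count $\sum_i f_i=t_3$, and agreement of the shared faces $H_iH_jx$ across links.
\end{itemize}
\end{enumerate}
Your ``compatibility constraints'' and ``case analysis'' point in this direction but supply none of it.

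Your fallback also fails as stated. Flips not involving $v$ leave $\lk(v)$ unchanged, so they cannot remove a nonregular pattern from $\tau_v$. A preparatory move must be a diagonal exchange in the link at $v$ itself, and such an exchange is a legal star-shrinking flip for only one sign. Take $\eta_p=1/\ell(p)$ and a planar circuit $c$ that is negative on the old diagonal. The spatial circuit on the four link labels and $v$ then has coefficient $-c\cdot\eta$ at $v$.
\begin{itemize}
\item Only if $c\cdot\eta>0$ do the two old $v$-tetrahedra form a complete circuit side.
\item Otherwise the exchange is a $3\to2$ move that needs the tetrahedron on the four link labels, and it enlarges the star.
\end{itemize}
Fixing this sign is exactly what the positive target accomplishes.

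The same bookkeeping corrects your description of the sweep. A flip strictly shrinks the star exactly when its old side consists only of $v$-tetrahedra: two for a link diagonal exchange, three for a degree-three link deletion. Pairing a $v$-tetrahedron with the tetrahedron beyond a link triangle instead adds a neighbor.

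Two smaller points. A placing triangulation skips every point lying inside the hull of its predecessors, so either choose the order so that this never happens, or cone a full regular triangulation of $P'$ from $v$ at large height. And the path among regular triangulations must stay in the convex set of heights keeping every point a lower vertex, so that no $1\leftrightarrow4$ move occurs.
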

There is no restriction on the numbers of hull and interior points.
Cospherical subsets are allowed. All paths keep the coordinates fixed;
they use no Steiner points, $1\leftrightarrow4$ moves, or nonsimplicial
spatial states.

The proof turns a suitable hull vertex into a \emph{placing} vertex:
its antistar fills the convex hull of the other points, and its remaining
tetrahedra are cones over visible boundary facets. Induction supplies a
flip path that makes the tetrahedralization of this smaller convex hull
regular while keeping the tetrahedra incident with the placing vertex
fixed. All original labels remain present. The resulting full
tetrahedralization is also regular, by the placing-extension argument in
Lemma~\ref{lem:placing}. Since all full regular tetrahedralizations are
connected by flips (Lemma~\ref{lem:regular}), every starting
tetrahedralization reaches the same flip component, even if different
starts use different placing vertices.

The principal local result, Theorem~\ref{thm:radial}, applies at any
cardinality. If the actual radial link of a hull vertex with $m$ mesh
neighbors is strictly regular, at most $\binom m4$ legal flips reach
placing form. Each move strictly shrinks the geometric star and adds no
neighbor. The endpoint has a global convex unit-pivot height. Forest
lifting and a short planar ear argument supply such links through seven
points and for every eight-point hull distribution with at least five
hull vertices. The tetrahedral-hull case requires two further ideas.
With at most three interior neighbors, positive reciprocal targets allow
one preparatory exchange. With complete hull/interior adjacency, the
four radial links cannot all have nonregular small-disk forms.

The proof also gives at most $35$ flips to placing form for eight points
(Corollary~\ref{cor:35}). This is neither a bound to a regular
triangulation nor a diameter bound.

\subsection{Related work and conventions}
Azaola and Santos prove connectivity for vector configurations of corank
three and $3$-connectivity in the acyclic case
\cite[Theorem 2 and Corollary 3.10]{AS2000}. For a $d$-dimensional point
configuration this concerns $d+4$ labels, hence seven in dimension three.
Their convention allows triangulations whose vertex set is a subset of
the configuration. We include a direct seven-point proof that explicitly retains every
label and use it in the induction for eight points, but do not claim
novelty for the seven-point result.

Regular triangulations and their secondary fans provide the standard
lifting framework \cite{DRS2010}. Pournin and Liebling prove that two
regular triangulations can be connected while preserving their common
faces, and deduce connectivity with a prescribed common vertex set
\cite[Theorem 8 and Corollary 9]{PL2007}. We give the needed full-label argument directly in
Lemma~\ref{lem:regular}. Santos~\cite{Santos2006} explains the distinction between geometric and abstract flips. Few available geometric moves do not
imply disconnection: De Loera, Santos, and Urrutia construct
three-dimensional triangulations with flip deficiency, including
configurations with one interior point \cite{DSU1999}.

The forest criterion used here is proved through positive equilibrium
weights and integration of affine lifting data. This is part of the
planar lifting framework discussed in \cite[Chapters 3 and 5]{DRS2010};
the criterion is not presented as a new planar regularity conclusion.
The contribution developed in this article is the self-contained route
from radial liftings and small-link compatibility to the stated
fixed-label connectivity theorem. We make no claim that the resulting flip bound is optimal.

\section{Geometric flips, regular triangulations, and placing induction}
\label{sec:prelim}
Throughout, $P\subset\mathbb R^3$ has $n\geq4$ points and no four distinct
points are coplanar. A \emph{full geometric tetrahedralization} $T$ is a
simplicial complex with $V(T)=P$, underlying space $|T|=\conv(P)$,
nondegenerate tetrahedral maximal simplices, and common-face intersections.
We write a simplex by juxtaposing its vertices. Put
\[
 N_T(v)=\{u:uv\in T\},\quad m_T(v)=|N_T(v)|,\quad
 \tau_T(v)=|\{\sigma\in T:\dim\sigma=3,\ v\in\sigma\}|.
\]
Thus $m_T(v)$ is a mesh-neighbor count and $\tau_T(v)$ is a tetrahedral
degree. For a hull vertex $q$, let $b(q)$ be its degree in the graph of
$\partial\conv(P)$. The degree of $T$ as a vertex of the flip graph counts
legal flips and is a fourth, different quantity. The link is
$\lk_T(v)=\{\sigma:v\notin\sigma,\ \sigma\cup\{v\}\in T\}$.
The geometric star $S_T(v)$ is the union of tetrahedra containing $v$;
the antistar is the subcomplex of faces not containing $v$.

\begin{lemma}[Exact circuit replacement]\label{lem:circuit}
Let $Q\subset P$ have five elements and let $Q_+\mid Q_-$ be its Radon
partition. Its two triangulations are
\[
 \mathcal T_+(Q)=\{Q\setminus\{a\}:a\in Q_+\},\qquad
 \mathcal T_-(Q)=\{Q\setminus\{a\}:a\in Q_-\}.
\]
If $|Q_+|=2$, $|Q_-|=3$, and one complete side consists of tetrahedra of
$T$, replacement by the other side is a legal full geometric flip.
\end{lemma}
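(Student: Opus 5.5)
We first record the two triangulations of the circuit, then check that swapping one for the other is legal. Since no four points of $Q$ are coplanar, $Q$ is a circuit in general position. Its affine dependence $\sum_{q\in Q}\lambda_q q=0$, $\sum\lambda_q=0$, has all five coefficients nonzero and is unique up to scale. This dependence fixes the Radon partition $Q_+=\{\lambda>0\}$, $Q_-=\{\lambda<0\}$. The sets $\conv(Q_+)$ and $\conv(Q_-)$ meet in exactly one point, the Radon point, and by the standard circuit description $\conv(Q)$ has exactly two triangulations. They are $\mathcal T_+(Q)$ and $\mathcal T_-(Q)$.

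To verify that $\mathcal T_+(Q)$ is a triangulation, we argue directly. A point $x\in\conv(Q)$ is written as $x=\sum\mu_q q$. We then move along the dependence, $\mu-t\lambda$, until the first coordinate indexed by $Q_+$ vanishes. This places $x$ in some $Q\setminus\{a\}$ with $a\in Q_+$. Two such simplices intersect properly because the dependence is unique, so the representations can differ only along $\lambda$. In the case $|Q_+|=2$, $|Q_-|=3$, the set $\conv(Q)$ is the bipyramid over the triangle $Q_-$ with apexes $Q_+$. Then $\mathcal T_+$ consists of the two tetrahedra glued along the triangle $Q_-$. $\mathcal T_-$ consists of the three tetrahedra around the edge $Q_+$. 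Both cover $\conv(Q)$, both have the same boundary complex (the six triangles $a\,b_ib_j$), and both use all five vertices.

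Now suppose one complete side, say $\mathcal T_+(Q)$, consists of tetrahedra of $T$; the other case is symmetric. Their union is $\conv(Q)$, which is convex. Let $T'$ be obtained by deleting the interior faces of $\mathcal T_+$ (the triangle $Q_-$ together with the two tetrahedra) and inserting the three tetrahedra of $\mathcal T_-$ with their interior faces (the edge $Q_+$ and the three triangles containing it). We then check four things.
\begin{enumerate}
\item $|T'|=|T|=\conv(P)$, because the replaced region is the same.
\item $V(T')=P$, because all five vertices of $Q$ survive on the boundary of $\conv(Q)$.
\item The maximal simplices of $T'$ are nondegenerate tetrahedra, by general position.
\item $T'$ has common-face intersections. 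New simplices lie in $\conv(Q)$. Any old simplex $\sigma\notin\mathcal T_+$ meets $\conv(Q)$ only in a face of $\partial\conv(Q)$, since $T$ was a complex and $\conv(Q)$ is a union of its cells. These boundary faces are common to both triangulations, so $\sigma\cap\tau$ is a common face for every new $\tau$.
\end{enumerate}
Hence $T'$ is a full geometric tetrahedralization, and $T\to T'$ is a legal flip.

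The main subtlety is the final point: we must make sure that no simplex of $T$ outside the circuit has a face that touches the new interior edge $Q_+$ or the removed triangle $Q_-$. This follows from the fact that $\conv(Q)$ is a union of cells of $T$, so its interior contains no other cells. The reverse direction, starting from three tetrahedra, needs the analogous check that the triangle $Q_-$ is not already an existing face elsewhere. It lies in the interior of $\conv(Q)$, so it cannot be.
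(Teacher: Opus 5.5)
Your proof is correct, but it takes a different route from the paper's. The paper's proof is essentially a citation. It invokes Santos's description of the two triangulations of a circuit and his replacement rule \cite[Lemma~1.13 and Definition~1.14]{Santos2006}, whose one hypothesis is that the old-side simplices have a common link in $T$. Since those simplices are tetrahedra, hence maximal, each link is $\{\varnothing\}$ and the hypothesis holds automatically. Fullness then follows because for a $2\mid3$ partition both sides use all five labels.

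You instead work from first principles. You show that $\mathcal T_\pm(Q)$ triangulate $\conv(Q)$ by sliding a convex representation along the dependence. Your terse ``uniqueness'' step does work: if $\mu-\nu=s\lambda$ with $\mu_a=\nu_b=0$ for distinct $a,b\in Q_+$, the signs at $a$ and at $b$ force $s=0$. You then identify the bipyramid and check the complex axioms for $T'$ directly. Your key observation is that the old side fills the convex set $\conv(Q)$, so every other simplex of $T$ meets it only in a face of the boundary complex shared by both sides. This is exactly what the link condition amounts to for maximal cells. It is also the informal justification the paper itself gives later, in the proof of Theorem~\ref{thm:radial} (``the complete old side \dots\ guarantees both support emptiness and compatibility with the exterior'').

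The citation buys brevity and places the move inside the standard theory of geometric bistellar flips. Your version is self-contained and makes the geometric mechanism explicit. One wording point: in the $3\to2$ direction it is the relative interior of the triangle $Q_-$, not the whole triangle, that lies in the interior of $\conv(Q)$, since its edges are edges of the bipyramid.
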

\begin{proof}
By general position, $Q$ is a full-dimensional circuit. Santos
\cite[Lemma~1.13 and Definition~1.14]{Santos2006} gives its two
triangulations and the replacement rule, provided the maximal simplices
on the old side have a common link in $T$. Here those simplices are
tetrahedra and hence maximal in $T$; each has link $\{\varnothing\}$,
so the condition is automatic. The rule therefore produces a geometric
triangulation by replacing precisely the displayed old side and leaving
the exterior unchanged. For a $2\mid3$ partition, each side uses all five
labels, so the resulting triangulation is still full. A $1\mid4$ move
instead activates or deactivates its interior label and is excluded
from $\mathcal F(P)$.
\end{proof}
\begin{figure}[htbp]\centering
\begin{tikzpicture}[scale=.85,line join=round]
\foreach \dx in {0,6}{
\begin{scope}[xshift=\dx cm]
\coordinate (a) at (-1.5,0);\coordinate (b) at (1.5,0);
\coordinate (c) at (.3,.8);\coordinate (u) at (0,2.2);\coordinate (v) at (0,-1.5);
\fill[blue!8] (a)--(b)--(c)--cycle;
\draw (a)--(b)--(u)--(a)--(v)--(b);
\draw (u)--(c)--(v);\draw[dashed] (a)--(c)--(b);
\foreach \p in {a,b,c,u,v}\fill (\p) circle (1.5pt);
\node[left] at (a) {$a$};\node[right] at (b) {$b$};\node[right] at (c) {$c$};
\node[above] at (u) {$u$};\node[below] at (v) {$v$};
\end{scope}}
\draw[<->,>=Stealth,thick] (2.3,.4)--(3.8,.4);
\draw[red!70!black,thick] (6,2.2)--(6,-1.5);
\node[align=center,anchor=north] at (0,-2.3) {$uabc,\ vabc$\\[2pt]two tetrahedra};
\node[align=center,anchor=north] at (6,-2.3) {$uvab,\ uvbc,\ uvca$\\[2pt]three tetrahedra};
\end{tikzpicture}
\caption{Schematic $2\leftrightarrow3$ circuit replacement. The Radon
partition is $\{u,v\}\mid\{a,b,c\}$; legality requires the complete old
side in the mesh. The drawing does not certify that partition.}
\label{fig:flip}
\end{figure}

A full triangulation is \emph{strictly regular} if some heights $h_p$
have exactly its tetrahedra as projected lower faces of
$\conv\{(p,h_p):p\in P\}$. Equivalently, their piecewise-affine
interpolant is convex with strict folds across every interior triangular
face. Upper regularity uses a concave interpolant and is equivalent by
negating heights. A weak convex interpolant may have flat folds and does
not establish strict regularity.

\begin{lemma}[The full regular locus]\label{lem:regular}
Full strictly regular tetrahedralizations exist, and they form one
connected subgraph of $\mathcal F(P)$.
\end{lemma}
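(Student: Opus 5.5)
Existence comes first. Take generic heights, for instance $h_p=|p|^2$ perturbed generically, or simply generic heights. Because no four points are coplanar, generic heights make every lower facet of $\conv\{(p,h_p)\}$ a simplex. To obtain a \emph{full} triangulation we also need every point to lie on the lower hull, and this holds when the lifted points are in convex position. Choose $h_p=|p|^2+\varepsilon g_p$ with small generic $\varepsilon g_p$. The paraboloid lift places every point as a vertex of the lower hull, and a small perturbation keeps convex position while making all lower facets simplicial. The projected lower facets then form a strictly regular full tetrahedralization.

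For connectivity I will use the secondary-fan picture \cite{DRS2010}. Restrict to the open cone $C\subset\mathbb R^P$ of height vectors whose lifts put every point strictly on the lower hull. This set is convex: if two height functions each give a convex interpolant lying strictly below $h_p$ at every $p$ relative to the other points' hull, then a convex combination does too. Concretely, a point $p$ is a lower-hull vertex exactly when $h_p<\min$ over convex combinations of the other points' heights representing $p$. Each such condition is linear in $h$, with a finite family of representations, so $C$ is an open polyhedral cone, hence convex and connected. Inside $C$, the chambers of full strictly regular triangulations are open polyhedral cones, and they are separated by walls of codimension one.

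Take two strictly regular full triangulations $T_0,T_1$ with heights $h^0,h^1\in C$. Join them by the segment $h^t$ and perturb it generically within $C$, which is possible since $C$ is open. The perturbed path then avoids all cells of codimension at least two. At each wall-crossing, the lower hull acquires exactly one non-simplicial facet, namely a lifted five-point set in affinely dependent position: a circuit $Q$ whose two triangulations swap. Lemma~\ref{lem:circuit} applies since a complete side is present, so the swap is a legal geometric flip. The main obstacle is ruling out $1\mid4$ circuits, which would remove a label. Here lying in $C$ is used. A $1\mid4$ crossing would move the interior point of $Q$ off or onto the lower hull, and this cannot happen while the path stays in $C$ because every point stays a lower vertex. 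Hence every crossing is a $2\mid3$ flip by Lemma~\ref{lem:circuit}, and it stays within $\mathcal F(P)$.

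Two details need care. The first is that the codimension-two avoidance argument requires $C$ to be open and the walls to be finitely many hyperplane pieces. This holds because only finitely many five-point subsets exist. The second is that at a generic wall point exactly one circuit degenerates, and all other lower facets stay simplicial and unchanged. Together these give a path in $\mathcal F(P)$ consisting of strictly regular triangulations joined by single flips, which proves the lemma.
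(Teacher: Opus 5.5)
Your proof is correct and follows essentially the same route as the paper. The ingredients match: the open convex cone $C$ of heights keeping every label a strict lower vertex is the paper's $\mathcal A$, whose finitely many conditions come from testing only the vertices of the polytope of convex representations of $p$. Existence comes from a generic perturbation of $\|p\|^2$, and connectivity from a generic segment in $C$ that crosses the circuit hyperplanes $c_Q(h)=0$ one at a time, with $1\mid4$ events excluded because the path stays in $C$.
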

\begin{proof}
For $n=4$ the sole tetrahedron gives the conclusion, so suppose $n\geq5$.
We first describe the heights that keep every label on the lower hull.
For each nonextreme point $p$, the set
\[
 \Lambda_p=\left\{(\alpha_u)_{u\ne p}:\alpha_u\geq0,\quad
 \sum_{u\ne p}\alpha_u=1,\quad
 \sum_{u\ne p}\alpha_u u=p\right\}
\]
is a nonempty compact polytope of convex representations by the other
points. For any such representation, the other lifted points have a
convex combination at spatial position $p$ and height
$\sum_{u\ne p}\alpha_u h_u$. Its minimum over $\Lambda_p$ is therefore
the lower envelope at $p$ formed without $p$. The lift $(p,h_p)$ is an
exposed lower vertex precisely when it lies strictly below this envelope:
\begin{equation}\label{eq:activity}
 h_p<\sum_{u\ne p}\alpha_u h_u\qquad(\alpha\in\Lambda_p).
\end{equation}
Strict inequality permits a separating affine plane through the lift
and below all other lifted points. Equality with the minimum makes the
lift a convex combination of other lifts; a height above that minimum
cannot lie on the lower envelope. Hull labels impose no restrictions:
choose an affine function $\ell$ with $\ell(p)=0$ and $\ell(u)<0$ for
$u\ne p$. For sufficiently large $M$, the plane of height
$h_p+M\ell(x)$ supports the lifted hull from below only at $(p,h_p)$.

A linear function on $\Lambda_p$ attains its minimum at a vertex, so
\eqref{eq:activity} need only be tested at its finitely many vertices.
The set $\mathcal A\subset\mathbb R^P$ of heights keeping every label
an exposed lower vertex is consequently a finite intersection of strict
linear halfspaces. It is open and convex. It is also nonempty: for
$h_p=\|p\|^2$ and every $\alpha\in\Lambda_p$,
\[
 \sum_{u\ne p}\alpha_u\|u\|^2-\|p\|^2
 =\sum_{u\ne p}\alpha_u\|u-p\|^2>0.
\]
The last inequality uses $u\ne p$, nonnegative weights, and their sum
being one. Membership in $\mathcal A$ preserves vertices but may still
allow nonsimplicial lower faces.

To obtain triangulations, for each five-point set $Q\subset P$ fix its
affine dependence $d^Q$, and put
\[
 c_Q(h)=\sum_{a\in Q}d_a^Qh_a.
\]
The five lifts lie in an affine three-dimensional hyperplane exactly
when $c_Q(h)=0$. These are finitely many hyperplanes in height space.
A generic perturbation inside the nonempty open set $\mathcal A$
avoids them all. Every lower facet then has four vertices and projects
to a tetrahedron; since every label remains a lower vertex, this gives
a full strictly regular tetrahedralization.

Now let $T^0,T^1$ be two such tetrahedralizations, realized by heights
$h^0,h^1$. Their realization conditions are finitely many strict
supporting-plane inequalities, so their lifting chambers are open.
Choose the heights within these chambers so that the segment
\[
 h(t)=(1-t)h^0+th^1,\qquad 0\leq t\leq1,
\]
has endpoints off the circuit hyperplanes and meets no intersection
of two of them. Such a choice avoids finitely many proper algebraic
conditions on the endpoints. Convexity keeps the entire segment in
$\mathcal A$, so no original label can disappear. Only auxiliary heights
change; all coordinates of $P$ stay fixed.

Between crossings the lower triangulation is constant. At a crossing
that changes it, a lower face contains exactly five labels: six would
force two distinct circuit equations to vanish simultaneously. The
neighboring triangulations contain its complete opposite circuit sides
and agree outside that face. A $1\mid4$ event would put the interior
label at the height of a convex combination of the other four,
contradicting \eqref{eq:activity}. Thus every change is a legal
$2\leftrightarrow3$ flip by Lemma~\ref{lem:circuit}. Ignoring crossings
that do not affect the lower hull gives the required finite path from
$T^0$ to $T^1$. Its states are the adjacent full regular triangulations;
the flat subdivisions at crossing times are not included.
\end{proof}

\begin{definition}
For $n\geq5$, a hull vertex $q$ is \emph{placing-removable} in $T$ if its
antistar has underlying space $K=\conv(P\setminus\{q\})$ and the other
tetrahedra are precisely $qF$ for the facets $F$ of $K$ visible from $q$.
We also say that $T$ is in \emph{placing form at $q$}. For $n=4$, the
sole tetrahedron is declared placing at each vertex; its antistar is the
opposite triangular face.
\end{definition}
Visibility means that $q$ lies strictly on the side opposite $K$ of the
facet's plane. For $n\geq5$, $K$ is three-dimensional and has a uniquely
triangulated boundary, since its facets are triangles. Convexity gives
\[
 \conv(P)=K\ \cup\!\!\bigcup_{F\text{ visible}}qF,
\]
with disjoint interiors and common-face intersections. This can be seen
by following the segment from $q$ to a point of $K$ to its first
intersection with $K$. If the antistar of a full triangulation has
underlying space $K$, its interface with the $q$-star is exactly this
visible boundary; hence it is in placing form.

\begin{lemma}[Regular placing extension and gluing]\label{lem:placing}
Suppose $n\geq5$ and $\mathcal F(P\setminus\{q\})$ is connected. Every
placing triangulation at $q$ is connected in $\mathcal F(P)$ to a full
strictly regular triangulation.
\end{lemma}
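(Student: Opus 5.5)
Let $T$ be in placing form at $q$, put $P'=P\setminus\{q\}$ and $K=\conv(P')$. The restriction $T'$ of $T$ to the antistar of $q$ is a full geometric tetrahedralization of $P'$, so $T'\in\mathcal F(P')$. The plan has three steps. First, use the assumed connectivity of $\mathcal F(P')$ to connect $T'$ to a strictly regular tetrahedralization of $P'$, while leaving the cone $\{qF: F\text{ visible}\}$ untouched. Second, show that the placing extension of a strictly regular triangulation of $P'$ is strictly regular on $P$. Third, note that every intermediate state is itself in placing form, so the whole path lives in $\mathcal F(P)$.

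\textbf{Step 1 (lifting flips).} Lemma~\ref{lem:regular}, applied to $P'$ (which has $|P'|\ge4$), supplies a strictly regular full tetrahedralization $R'$ of $P'$. Connectivity of $\mathcal F(P')$ gives a flip path $T'=T'_0,T'_1,\dots,T'_k=R'$.

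For each $i$, define $T_i=T'_i\cup\{qF:F\text{ visible from }q\}$, closed under faces. Every $T'_i$ has underlying space $K$ and boundary equal to the uniquely triangulated $\partial K$. The displayed decomposition $\conv(P)=K\cup\bigcup qF$ therefore shows that each $T_i$ is a full geometric tetrahedralization in placing form at $q$.

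Each flip $T'_i\to T'_{i+1}$ replaces one complete side of a five-point circuit $Q\subset P'$ by the other side. All these tetrahedra lie in $T_i$ and $T_{i+1}$ respectively, and the two complexes agree elsewhere. Hence Lemma~\ref{lem:circuit} makes $T_i\to T_{i+1}$ a legal $2\leftrightarrow3$ flip in $\mathcal F(P)$. This connects $T$ to $R=T_k$.

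\textbf{Step 2 (regularity of the extension).} Let $h'$ be heights on $P'$ realizing $R'$ strictly. Let $g$ be the convex piecewise-affine interpolant of $h'$ on $K$; it has strict folds across the interior triangles of $R'$.

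Choose $h_q$ very large and consider the lower hull of the lifted set $P'\cup\{q\}$.
\begin{itemize}
\item For each facet $F$ of $K$, the affine functions on the tetrahedra of $R'$ adjacent to $F$ extend beyond $K$. For $h_q$ large, the lifted $q$ lies strictly above the extension of the affine function on the tetrahedron of $R'$ containing each non-visible facet. So the lower faces over $K$ stay exactly the tetrahedra of $R'$.
\item Over a visible facet $F$, the point $q$ lies on the far side of $F$. The lower hull therefore contains the simplex joining lifted $F$ to lifted $q$, provided we check convexity across the triangles $F$ and across the triangles $qe$ for edges $e$ of the visible region.
\end{itemize}

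Two mechanisms give strict folds. Across each visible $F$, the fold is strict for large $h_q$: the lift of $q$ on the $qF$ side rises above the extended plane of the $K$-side tetrahedron, because $q$ is strictly beyond $F$. Across the interior triangles $qe$ separating $qF_1$ from $qF_2$, strict convexity should follow from strict convexity of $K$ at $e$, i.e.\ the non-coplanarity of $F_1,F_2$ guaranteed by general position.

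A cleaner alternative is to phrase this as a standard placing extension. Take the heights $(h',h_q)$ with $h_q\to\infty$ and read off that the lower hull is $R'$ together with the cones from $q$ over lower-visible faces. Since no four points are coplanar, every fold is strict.

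\textbf{Main obstacle.} I expect Step 2's verification to be the delicate part. One must show that a single finite $h_q$ works simultaneously for all facets. One must also confirm that the folds at triangles $qe$ are strict rather than merely weak, since the paper stresses that a weak convex interpolant does not suffice. I would first try the cleanest argument: compare against the convex function $\max$ over the affine pieces of $R'$ extended to $\mathbb R^3$. For $h_q$ large, the lower hull over $\conv(P)\setminus K$ is then the cone from lifted $q$ over the graph of $g$ on visible $\partial K$. This reduces the check to finitely many strict inequalities, each of which holds for all sufficiently large $h_q$.
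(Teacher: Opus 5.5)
Your Step~1 is exactly the paper's gluing argument. Your plan for Step~2, putting a large height $h_q$ on top of a strict lifting $h'$ of $R'$, is also the paper's. However, Step~2 as written is not yet a proof: one justification is wrong and another is incomplete.

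The closing claim that ``since no four points are coplanar, every fold is strict'' does not follow. A fold is strict when the five \emph{lifted} points of the corresponding circuit do not lie on one hyperplane of $\mathbb R^4$, i.e.\ $c_Q(h)\ne0$. That is a condition on the heights, not on $P$.

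The fold across $qe$ between $qF_1$ and $qF_2$ is also not a consequence of strict convexity of $K$ at $e$ alone. It is the inequality $A_{qF_1}(p_2)<h'(p_2)$, where $p_2$ is the vertex of $F_2$ off $e$ and $A_{qF_1}$ is the affine interpolant on $qF_1$. It holds only once $h_q$ passes a threshold. Convexity of $K$, together with visibility of $F_1$, supplies only the sign of the $h_q$-coefficient.

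Your first bullet also mentions only tetrahedra on non-visible facets. Keeping all of $R'$ on the lower hull needs $h_q>A_\sigma(q)$ for \emph{every} tetrahedron $\sigma$ of $R'$. Meanwhile, non-visible facets of $K$ are facets of $\conv(P)$ and carry no fold at all.

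The idea that dissolves your ``main obstacle'' is the sign of one barycentric coordinate. For a visible facet $F$ and any $p\in P'\setminus F$, write $p=\alpha_q(p)\,q+\sum_{v\in F}\alpha_v(p)\,v$ with respect to $qF$. Since $p$ lies strictly on the $K$-side of $\operatorname{aff}F$ and $q$ strictly on the other side, $\alpha_q(p)<0$. Hence
\[
A_{qF}(p)=\alpha_q(p)h_q+\sum_{v\in F}\alpha_v(p)h'(v)\to-\infty \quad\text{as } h_q\to\infty .
\]
So for large $h_q$, the plane of each lifted cone $qF$ lies strictly below every other lift. The plane of each tetrahedron of $R'$ lies strictly below $(q,h_q)$, and it is already strictly below the other lifts of $P'$. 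These are finitely many strict conditions, so a single $h_q$ works. Every tetrahedron of $R'$ together with the cones $qF$ then has a strict lower supporting hyperplane, and since they tile $\conv(P)$, they are exactly the lower facets.

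This is the paper's proof. It checks global supporting hyperplanes directly, so no fold bookkeeping is needed. Your fold route can also be completed: your two fold conditions are the special cases $p=p_2$ and $\sigma\supset F$ of these inequalities. But that route additionally requires a local-to-global fact you never state. Namely, a continuous piecewise-affine function on the convex domain $\conv(P)$ with strictly convex folds across all interior triangles is globally convex.
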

\begin{proof}
Let $R$ be a full regular triangulation of $P\setminus\{q\}$ with strict
lower heights $h$. Extend it by the visible cones $qF$. Give $q$ height
$M$. For sufficiently large $M$, every supporting plane of a tetrahedron
of $R$ lies strictly below $(q,M)$. The plane through $(q,M)$ and the
three lifted vertices of a visible facet $F$ lies strictly below every
other lifted point of $K$: its value at such a point decreases without
bound as $M$ grows, because the affine coordinate of $q$ relative to
$qF$ is negative there. There are finitely many inequalities, so a single
$M$ satisfies all. These tetrahedra therefore form a full strictly regular
placing extension of $R$.

By the hypothesis, the initial antistar can be flipped to $R$. Write
this smaller path as $R_0,\ldots,R_s=R$, and let $\mathcal C$ be the
fixed family of visible cones $qF$. The boundary of every $R_i$ consists
of the same triangular facets of $K$: general position leaves no facet
with a choice of boundary triangulation. Thus $R_i\cup\mathcal C$ is a
full ambient triangulation, with the same interface at every step.

Each smaller flip has its whole support inside $K$. Its complete old
circuit side is present in $R_i\cup\mathcal C$, and its replacement
changes precisely the tetrahedra of $R_i$ involved in that flip. No
member of $\mathcal C$ can belong to either circuit side, since those
sides use only labels in $P\setminus\{q\}$. Lemma~\ref{lem:circuit}
therefore gives exactly $R_{i+1}\cup\mathcal C$ as the next legal ambient
state. All original labels, including $q$, remain active. The auxiliary
notation $P\setminus\{q\}$ describes a subcomplex throughout, not a
vertex deletion along the path.
\end{proof}

Different starting triangulations may reach placing form at different
hull vertices. Lemma~\ref{lem:placing} connects each endpoint to the one
full regular locus of Lemma~\ref{lem:regular}; no common pivot is needed.

\section{Radial links and a regular lifting sweep}
\label{sec:radial}
Fix a hull vertex $q$. Write $N_T(q)$ for its mesh neighbors,
$m_T(q)=|N_T(q)|$, $b(q)$ for its degree in the graph of the convex hull,
and $i_T(q)=m_T(q)-b(q)$. These numbers differ from the number of tetrahedra
incident with $q$. Translate $q$ to the origin and choose a linear
functional $\ell$ strictly positive on every other point. Set
\begin{equation}\label{eq:radial}
 \lambda_p=\ell(p),\qquad z_p=p/\lambda_p,\qquad
 \eta_p=1/\lambda_p.
\end{equation}
The plane $\ell=1$ cuts the tangent cone of the hull at $q$ in a convex
polygon $B$. For a sufficiently small positive $s$, every segment
$[q,sp]$, $p\in P$, lies in the geometric star of $q$: the finitely many
simplices disjoint from $q$ have positive distance from $q$.
The cones on the faces of $\operatorname{lk}_T(q)$ consequently cover
that tangent cone with disjoint relative interiors. Their sections form
a straight triangulation $D$ of $B$. Its boundary vertices are exactly
the hull-edge neighbors of $q$, since these edges are precisely the
extreme rays of the tangent cone. Every other mesh neighbor projects
strictly inside $B$. Such a neighbor can itself be a global hull vertex;
it need not be an interior point of $P$. All labels of $P\setminus\{q\}$
project into $B$, whether or not they are mesh neighbors. No three
projected labels are collinear: otherwise their three original points
and $q$ would be coplanar.

A triangulation of these actual projected coordinates is \emph{strictly
upper regular} if vertex heights induce it as the projection of the upper
faces of their convex hull, with every label used and every interior
fold strict. Equivalently its piecewise-affine interpolant is concave
with a strict fold across every interior edge. For these auxiliary planar
liftings we use upper hulls; negating the heights realizes the same
triangulation under the lower-hull convention of Section~\ref{sec:prelim}.
Throughout this section,
regularity refers to these coordinates, rather than a different
realization of the same abstract disk.
\begin{figure}[ht]
\centering
\begin{tikzpicture}[scale=.86,every node/.style={font=\small}]
 \begin{scope}
  \coordinate (q) at (1.5,-.5);
  \coordinate (A) at (0,3); \coordinate (B) at (4,3);
  \coordinate (C) at (2.7,4.4); \coordinate (X) at (2,3.55);
  \fill[blue!6] (0.48,1.87)--(3.2,1.87)--(2.32,2.82)--cycle;
  \draw[blue!45] (0.48,1.87)--(3.2,1.87)--(2.32,2.82)--cycle;
  \draw[gray] (q)--(A) (q)--(B) (q)--(C);
  \draw[orange!80!black] (q)--(X);
  \foreach \p in {q,A,B,C,X} \fill (\p) circle (1.6pt);
  \node[below=3pt] (qlabel) at (q) {$q=0$};
  \node[left] at (A) {$A$}; \node[right] at (B) {$B$};
  \node[above] at (C) {$C$}; \node[right] at (X) {$x$};
  \node[blue!60!black,left] at (.4,2.25) {$\ell=1$};
  \fill[orange!80!black] (1.84,2.25) circle(1.7pt);
  \node[above left=8pt and 6pt] at (1.84,2.25) {$z_x$};
  \node[align=center,below=7pt of qlabel] {Radial projection\\[2pt] $p=\lambda_pz_p$};
 \end{scope}
 \begin{scope}[shift={(7,0)}]
  \coordinate (A) at (0,1); \coordinate (B) at (4,1);
  \coordinate (C) at (2,4); \coordinate (X) at (1.8,2);
  \fill[blue!6] (A)--(B)--(C)--cycle;
  \draw[thick,blue!65!black] (A)--(B)--(C)--cycle;
  \draw[orange!80!black] (X)--(A) (X)--(B) (X)--(C);
  \foreach \p in {A,B,C} \fill[blue!65!black] (\p) circle (2pt);
  \fill[orange!80!black] (X) circle (2pt);
  \node[left] at (A) {$z_A$}; \node[right] at (B) {$z_B$};
  \node[above] at (C) {$z_C$}; \node[above right] at (X) {$z_x$};
  \node[align=center,anchor=north,blue!65!black] (boundarylegend) at (2,.55)
    {Boundary: global hull-edge neighbors};
  \node[align=center,orange!80!black,below=7pt of boundarylegend]
    {Interior: additional mesh neighbors\\[2pt] (possibly global hull vertices)};
 \end{scope}
\end{tikzpicture}
\caption{Schematic radial section. The link fills the convex tangent-cone
section; boundary status in the section is determined by adjacency to
$q$ in the global hull, not by whether the other point is globally
interior. The drawing asserts no determinant or occupancy sign.}
\label{fig:radialprojection}
\end{figure}

\begin{theorem}[Regular radial-link escape]\label{thm:radial}
Let $T$ be a full geometric tetrahedralization and let $q$ be a hull
vertex with $m$ mesh neighbors. If its radial link is strictly regular,
at most $\binom m4$ legal geometric $2\leftrightarrow3$ flips make $q$
placing-removable. Each move strictly shrinks its geometric star, adds
no neighbor of $q$, and preserves every initial tetrahedron not
containing $q$. If $|P|\geq5$, the endpoint admits the convex nonaffine
piecewise-affine height with $h(q)=1$ and $h(p)=0$ for $p\ne q$.
\end{theorem}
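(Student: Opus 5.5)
We follow the route suggested by the statement: reduce the spatial question to a height sweep on the radial section $B$. Put $q$ at the origin and use \eqref{eq:radial}. A plane not through $q$, written $n\cdot x=1$, contains $x=z/\eta$ exactly when $n\cdot z=\eta$. So every triangle $abc$ of $\lk_T(q)$ becomes the graph of the affine interpolant of the values $\eta_a,\eta_b,\eta_c$ over the planar triangle $z_az_bz_c$. Write $f_D$ for the piecewise-affine interpolant of $\eta$ over $D$. The geometric star $S_T(q)$ is then $\{z/\eta:\ z\in B,\ \eta\geq f_D(z)\}$. A label $p$ lies outside the open star exactly when $\eta_p\leq f_D(z_p)$, with equality only for mesh neighbors, by general position. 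Let $E$ be the concave (upper) envelope of the points $(z_p,\eta_p)$, $p\ne q$; its graph is the visible boundary of $K=\conv(P\setminus\{q\})$. Every tetrahedron not containing $q$ lies in $K$, so the star covers $\conv(P)\setminus K$, and hence $f_D\geq E$. If $p$ is a vertex of $E$, then $\eta_p=E(z_p)\leq f_D(z_p)$ while $\eta_p\leq f_D(z_p)$ also holds, which forces equality. Thus every vertex of $E$ is already a mesh neighbor of $q$. By the remark after the placing definition, placing form at $q$ means exactly that the link surface is the graph of $E$, i.e.\ that $D$ equals the upper regular triangulation induced by the heights $\eta$ restricted to $N_T(q)$.

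Next comes the sweep. Let $g$ be strict upper heights on $N_T(q)$ inducing $D$, and perturb $g$ generically. For $\mu\in[0,\infty)$ let $D_\mu$ be the upper regular triangulation of $\{z_p:p\in N_T(q)\}$ with heights $\eta+\mu g$. Non-neighbors are never used. For large $\mu$ we have $D_\mu=D$. At $\mu=0$ the vertices used are exactly the vertices of $E$, since all of them are neighbors, and $D_0$ is the placing link. For each $4$-subset of projected neighbors, and for each point-in-triangle configuration, the lifted-coplanarity determinant is affine in $\mu$. It therefore vanishes at most once, and by genericity the crossings are isolated and simple. At a crossing, only the sub-configuration of the planar circuit changes. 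Above $\mu^*$ it carries the side that is upper for $g$; below $\mu^*$ it carries the side that is upper for $\eta$. Hence at every crossing the new $\eta$-interpolant lies weakly above the old one, and strictly above somewhere.

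Translate each planar event into a spatial flip. An edge flip $ab\to cd$ in $D$ replaces $qabc,qabd$ by $qacd,qbcd,abcd$. A degree-three vertex removal of $d$ inside $abc$ replaces $qabd,qbcd,qcad$ by $qabc,abcd$. In both cases the five points form a circuit with a $2\mid3$ Radon partition, and the complete old side consists of tetrahedra of $T$, since the old planar side is present in the current link. Lemma~\ref{lem:circuit} therefore makes each event a legal flip. Because the interpolant rises, the added tetrahedron $abcd$ lies outside the new star, so the star strictly shrinks. No neighbor is added, because only existing neighbors are ever used. Tetrahedra not containing $q$ are only ever added, never removed. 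Each $4$-subset of $N_T(q)$ supports at most one event, which gives at most $\binom m4$ flips. At the endpoint, the height with $h(q)=1$ and $h=0$ on $P\setminus\{q\}$ is zero on $K$ and affine on each cone $qF$. It is convex because $q$ lies strictly beyond each visible facet, and it is nonaffine with strict folds across the visible facets because $K$ is three-dimensional when $|P|\geq5$.

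The main obstacle is the direction claim at each crossing: that passing from the $g$-preferred side to the $\eta$-preferred side always raises the $\eta$-interpolant, even when the circuit sits inside a non-convex planar neighborhood. Alongside it, one must check that the planar transition really matches a spatial $2\mid3$ partition and not a degenerate or $1\mid4$ one. I would handle both by writing the lifted-coplanarity determinant as $c(\eta)+\mu\,c(g)$. The sign of $c(g)$ at large $\mu$ identifies the side present in $D$. The sign of $c(\eta)$ identifies the side lying below in space, i.e.\ the side outside $K$-ward. The condition $f_D\geq E$ and the fact that only neighbors appear ensure that no label ever crosses the link surface.
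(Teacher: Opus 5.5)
Your sweep is the paper's own: with $g=w^0$ and $\mu=(1-t)/t$, the height $\eta+\mu g$ is a positive multiple of $(1-t)w^0+t\eta$. The translation of exchanges and degree-three deletions into legal flips, the rising $\eta$-interpolant, the shrinking star, and the $\binom m4$ count also match the paper. You identify the endpoint in advance, as the image of the visible facets of $K$, whereas the paper identifies it afterwards via $\conv(P)\cap\{G\geq1\}$. That is a legitimate variant.

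The inequality your variant rests on is reversed, though. Larger $\eta$ means closer to $q$. So on each ray the star is $\{\eta\geq f_D(z)\}$ and $\conv(P)\setminus K$ is $\{\eta>E(z)\}$, and coverage gives $f_D\leq E$, not $f_D\geq E$. As written, ``$\eta_p=E(z_p)\leq f_D(z_p)$ while $\eta_p\leq f_D(z_p)$'' states the same inequality twice and forces nothing. With the correct sign, $E(z_p)=\eta_p\leq f_D(z_p)\leq E(z_p)$, and your lemma that every vertex of $E$ is a neighbor of $q$ survives.

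The more serious consequence is that $f_D\geq E$ is your only justification that ``no label ever crosses the link surface'', i.e.\ that no planar insertion ever occurs. Your event list contains only exchanges and deletions. But a crossing of a point-in-triangle circuit $\{x,a,b,c\}$ could a priori go in the insertion direction, $\eta_x>\sum_i\theta_i\eta_i$. Spatially that gives $\sum_i\theta_i\eta_i/\eta_x<1$: a $1\mid4$ circuit with $x$ strictly inside $qabc$, which is not a legal $2\leftrightarrow3$ move. With the correct sign, $f_{D_\mu}$ and the label heights are both merely bounded above by $E$. That gives no comparison between them, so your argument excludes nothing.

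You must add a step. For instance, $D_\mu=D$ uses all of $N_T(q)$ for large $\mu$, so an insertion would have to reinsert some $x$ deleted at an earlier crossing. At that deletion $\eta_x$ lay strictly below the new interpolant at $z_x$, and the interpolant never decreases afterward, so $x$ can never again lie above it. Alternatively, as in the paper, the inserted label is active and would lie strictly inside the current tetrahedron $qabc$ of a valid complex. With this step added, the proposal is complete.
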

\begin{proof}
Choose a strict upper lifting $w^0$ of $D$. Its realization conditions
are finitely many strict linear inequalities, so its lifting cone is
open. Choose one circuit row for each four-label support. Such a planar
circuit has coefficients $c_p$ satisfying
\[
 \sum c_pz_p=0,\qquad \sum c_p=0.
\]
Every coefficient is nonzero. Moreover $c\cdot\eta\ne0$: otherwise
$c_p/\lambda_p$ would be an affine dependence of the corresponding four
spatial labels. Choose $w^0$ off every circuit hyperplane and every
hyperplane
\[
 (c\cdot w^0)(d\cdot\eta)=(d\cdot w^0)(c\cdot\eta)
 \quad(c\ne d).
\]
These are proper hyperplanes. In the latter equation this follows from
$c\cdot\eta,d\cdot\eta\ne0$ and the nonproportionality of circuit rows
on distinct four-label supports. An open cone cannot be covered by
finitely many proper hyperplanes.

Follow the upper hull along
\begin{equation}\label{eq:sweep}
 w(t)=(1-t)w^0+t\eta,\qquad 0\leq t\leq1.
\end{equation}
Each circuit crosses at most once, and crossing times are distinct.
Only crossings that change an upper face matter. At such a crossing
exactly four lifted labels lie on one supporting face; the adjacent
upper triangulations differ by a planar diagonal exchange, a deletion
of a degree-three interior vertex, or its reverse. Boundary events
cannot occur because there are no collinear triples. The momentary
flat upper face only describes the transition between two meshes;
it is never used as a spatial state. On the changed region the new
interpolant of the \emph{fixed} heights $\eta$ is strictly greater than
the old one: the segment crosses the circuit hyperplane toward the sign
of $c\cdot\eta$, and the two interpolants differ by a tent of constant
strict sign in the interior.

We maintain an ambient full tetrahedralization whose radial link is
the current upper triangulation. More precisely, its $q$-tetrahedra
are exactly the cones over the current upper triangles; their link
labels are a subset of the original $m$ neighbors, and every initial
non-$q$ tetrahedron is still present. All labels of $P$ remain active
in the spatial mesh, including those that cease to occur in its radial
link. These assertions hold initially. The following circuit
replacements verify them successively at each event.
For a diagonal exchange normalize $c$
to be negative on the old diagonal and positive on the new one. Increase
of the $\eta$ interpolant gives $c\cdot\eta>0$. The spatial circuit on
these four labels and $q=0$ has coefficients
\begin{equation}\label{eq:spatialcircuit}
 d_p=c_p\eta_p,\qquad d_q=-c\cdot\eta<0.
\end{equation}
Indeed $\sum d_pp=\sum c_pz_p=0$ and $d_q+\sum d_p=0$.
The two old tetrahedra containing $q$ are exactly the circuit side
obtained by omitting its two positive labels. Thus they fill the convex
five-point support, and Lemma~\ref{lem:circuit} gives a legal $2\to3$
move. Two new tetrahedra contain $q$; the third does not. The complete
old side, already part of a full geometric complex, guarantees both
support emptiness and compatibility with the exterior.

For a degree-three deletion, write
$z_x=\theta_Az_A+\theta_Bz_B+\theta_Cz_C$ with all $\theta_i>0$ and
$\sum\theta_i=1$. Increase means
\[
 \eta_x<\theta_A\eta_A+\theta_B\eta_B+\theta_C\eta_C.
\]
In the spatial circuit the positive Radon part is $\{q,x\}$, after an
overall choice of sign. The three old tetrahedra $qxAB,qxBC,qxCA$ are
its complete three-cell side. The resulting $3\to2$ replaces them by
$qABC$ and $xABC$. In particular the deleted \emph{link} label remains
an active spatial vertex in the antistar.

A planar insertion is impossible. It would require an unused link
label $x$ inside a current triangle $ABC$ with
$\eta_x>\sum\theta_i\eta_i$. But then
\[
 p_x=\sum_{i=A,B,C}\frac{\theta_i\eta_i}{\eta_x}p_i,
 \qquad 0<\sum_i\frac{\theta_i\eta_i}{\eta_x}<1.
\]
The positive remainder is the coefficient at $q$, so the already active
spatial label $p_x$ would be strictly inside the present tetrahedron
$qABC$. This contradicts the full geometric complex. The same argument
excludes reactivation of a neighbor removed at an earlier event. The
induction therefore uses only legal spatial moves, never introduces a
neighbor, and never removes an original label. Each old side consists
entirely of $q$-tetrahedra and transfers positive volume to one new
non-$q$ tetrahedron. Consequently the geometric star strictly shrinks
and every initial exterior tetrahedron remains. There are at most
$\binom m4$ events.

It remains to establish the global endpoint, rather than merely a
relative lifting. At $t=1$ let $g$ be the positive concave interpolant of
$\eta$ on the remaining link. If $A_\sigma$ is the affine supporting
plane of a triangle, then $g=\min_\sigma A_\sigma$ on $B$. Extend these
planes homogeneously to linear functions $L_\sigma$ on the tangent
cone and put
\[
 G(x)=\ell(x)g(x/\ell(x))=\min_\sigma L_\sigma(x),\qquad G(q)=0.
\]
On the ray $rz$, $z\in B$, the geometric star ends at $r=1/g(z)$.
All other active labels lie on or beyond this endpoint: a label strictly
before it would lie inside a simplex containing $q$, which is
incompatible with the complex unless it were a vertex, and even a
neighbor lies exactly at the endpoint. Thus the geometric antistar is
\begin{equation}\label{eq:antistar}
 \operatorname{conv}(P)\cap\{G\geq1\}
 =\operatorname{conv}(P)\cap\bigcap_\sigma\{L_\sigma\geq1\}.
\end{equation}
The equality also holds on the common link surface, not only away
from it. On a $q$-tetrahedron, $G=L_\sigma$ is the sum of the three
barycentric coordinates at its non-$q$ vertices. Its level $G=1$ is
therefore precisely the face opposite $q$, which belongs to the
antistar. On a tetrahedron not containing $q$, concavity of $G$ and its
values at the vertices give $G\geq1$ throughout. These observations,
together with coverage by the full mesh, prove both inclusions in
\eqref{eq:antistar}. That set is convex and contains
$P\setminus\{q\}$. Conversely every simplex in the antistar has vertices in $P\setminus\{q\}$.
These two inclusions show that it is exactly
$\operatorname{conv}(P\setminus\{q\})$.

On a $q$-tetrahedron the unit-$q$ interpolant is the barycentric
coordinate $1-L_\sigma=1-G$, and on the antistar it is zero. Globally it
therefore equals
\begin{equation}\label{eq:unitheight}
 f(x)=\max(0,1-G(x))
     =\max\bigl(0,\{1-L_\sigma(x)\}_\sigma\bigr).
\end{equation}
This maximum of affine functions is convex. If $|P|\geq5$, four points
other than $q$ are affinely independent. No affine function can vanish
at all four and equal one at $q$, so this height is nonaffine.
\end{proof}

\section{Forest liftings and connectivity through seven points}
\label{sec:forest}
The following lifting criterion is useful because it concerns only the
graph induced by the interior vertices of a planar disk. We give the
complete equilibrium construction; no novelty is asserted for this
planar criterion. It is an instance of the equilibrium--lifting
correspondence discussed in the standard theory of regular
triangulations~\cite{DRS2010}.

\begin{theorem}[Interior-forest regularity]\label{thm:forest}
Let $D$ be a straight triangulation of a planar point set with no three
collinear, whose hull is a convex polygon. If the graph induced by its
interior vertices is a forest, then $D$ is strictly regular in its given
coordinates.
\end{theorem}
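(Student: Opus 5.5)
We will produce a strict concave lift of $D$ through the equilibrium--lifting correspondence. First we build a strictly positive equilibrium stress on the interior edges of $D$, using the forest hypothesis to avoid any conflict between equilibrium equations. Then we integrate that stress into a piecewise-affine height whose fold across each interior edge equals the prescribed positive weight. Here an interior edge is any edge not on $\partial\,\conv$. Such an edge either joins two boundary vertices (a chord) or has at least one interior endpoint.

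\emph{Step 1: local positive equilibria.} Let $v$ be an interior vertex with cyclically ordered neighbors $u_1,\dots,u_k$. Consecutive angles at $v$ are less than $\pi$, because the triangles $vu_iu_{i+1}$ are nondegenerate and surround $v$. Hence the vectors $u_i-v$ positively span the plane, and $0$ lies in the interior of their convex hull. So there are coefficients $\alpha^v_i>0$, all strictly positive, with $\sum_i\alpha^v_i(u_i-v)=0$. Any positive multiple $\lambda\alpha^v$ is again such an equilibrium.

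\emph{Step 2: propagating through the forest.} Root each tree of the forest induced by the interior vertices, and process vertices top-down in breadth-first order.
\begin{itemize}
\item At a root $v$, set $\omega_{vu_i}=\alpha^v_i$.
\item At a nonroot $c$ with parent $p$, exactly one incident edge already carries a weight, namely $\omega_{cp}>0$. This uses the forest hypothesis: the other interior neighbors of $c$ are its children and are not yet processed, since a second processed interior neighbor would close a cycle. Choose $\lambda=\omega_{cp}/\alpha^c_p$ and set $\omega_{cu}=\lambda\alpha^c_u$ for every neighbor $u$ of $c$. This matches the fixed value on $cp$ and satisfies equilibrium at $c$ with all weights positive.
\end{itemize}
Edges from an interior vertex to a boundary vertex are constrained only by their interior endpoint, since boundary vertices impose no equation. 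Chords joining two boundary vertices receive an arbitrary positive weight, say $1$. The result is $\omega>0$ on every interior edge, with equilibrium at every interior vertex.

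\emph{Step 3: integration and the local-to-global step.} We expect this step to be the main obstacle. Fix an affine function on one triangle. Crossing an interior edge $e$ with unit normal $n_e$ (oriented from the old triangle to the new one) and a point $x_0$ on $e$, subtract $\omega_e\,\langle n_e,x-x_0\rangle$ from the gradient data. This is the fold of size $\omega_e$ making the interpolant concave across $e$.
\begin{itemize}
\item \emph{Well-definedness.} The dual graph of the disk $D$ has cycle space generated by the small cycles around interior vertices. Around an interior vertex $v$, the total gradient change is a rotation by $90^\circ$ of $\sum_i\omega_{vu_i}(u_i-v)=0$. So the monodromy vanishes and the height $h$ on vertices is well defined.
\item \emph{Concavity.} Every fold is strict and of the concave sign, so the piecewise-affine interpolant $g$ is locally concave at every interior edge. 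On the convex domain, local concavity across every edge implies global concavity: restrict $g$ to a generic segment, which meets only edges transversally, and observe that its slope is nonincreasing.
\end{itemize}

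\emph{Conclusion.} The strict folds show that each triangle of $D$ is exactly a maximal region of affinity. Each interior vertex is a strict fold point, so it is not above any other face and every label is used. Hence $D$ is the projection of the upper faces of the lifted points with heights $g$, and $D$ is strictly upper regular in its given coordinates. Negating $g$ gives the lower convention.
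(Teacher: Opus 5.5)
Your route is the paper's: positive local equilibria, rescaling along each rooted tree (your $\lambda=\omega_{cp}/\alpha^c_p$ is exactly the paper's $t_u=t_va_{vu}/a_{uv}$), integration of plane jumps around dual loops, and concavity along generic segments. One step, however, is wrong as written.

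You prescribe the fold across $e=vu$ as $\omega_e\langle n_e,x-x_0\rangle$ with a \emph{unit} normal $n_e$. The gradient jump is then $\omega_{vu}J(u-v)/|u-v|$. So the total gradient change around an interior vertex $v$ is, up to sign, $J\sum_u\omega_{vu}(u-v)/|u-v|$, not $J\sum_u\omega_{vu}(u-v)$. The equilibrium from Step~2 does not make this vanish unless all edges at $v$ have the same length, so your monodromy claim is false for your own weights.

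The repair is immediate: use the unnormalized jump $\omega_{vu}J(u-v)$, i.e.\ a fold of size $\omega_e|u-v|$, as in \eqref{eq:plane-jumps}. Alternatively, build the equilibria in Step~1 from unit vectors. With that fix you should also close the constant term, which you do not mention. Every jump around $v$ can be based at $x_0=v$, so the composite jump is a linear function vanishing at $v$ with zero gradient, hence zero.

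Beyond this, you assert several things the paper actually proves:
\begin{itemize}
\item that the loops around interior vertices span the cycle space of the dual graph (the paper computes its dimension $v_D-b$ by Euler's formula and checks independence of the loops);
\item that concavity on generic segments passes to all segments by approximation and continuity;
\item that the upper hull of the lifted vertices is exactly the graph of the concave interpolant.
\end{itemize}
Your sentence that each interior vertex ``is not above any other face'' does not establish that the maximal upper faces are exactly the triangles. That needs the paper's argument: each triangle's plane supports the graph by concavity, the contact set is convex, and a strict fold prevents it from extending across an interior edge. Every label is then used simply because each vertex lies on some triangle.
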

\begin{proof}
At each interior vertex $v$, the vectors $u-v$ to all its neighbors
surround the origin. Hence choose strictly positive coefficients
$a_{vu}$ such that
\[
 \sum_{u\sim v}a_{vu}(u-v)=0.
\]
One direct construction assigns a small positive coefficient to every
vector and cancels their sum with a nonnegative combination of the
same vectors; their positive cone is the whole plane. Root each tree
of interior vertices. Starting with any positive root scale, define
recursively
\[
 t_u=t_va_{vu}/a_{uv}
\]
at each interior child $u$ of $v$. A forest has no competing path, so
these scales are consistent. Put $\omega_{vu}=t_va_{vu}$; it is symmetric
on edges with two interior endpoints. On an edge from an interior to a
boundary vertex it is defined by its interior endpoint. Give arbitrary
positive weights to interior edges with two boundary endpoints. The
result is a symmetric positive weight on every nonboundary edge, with
\begin{equation}\label{eq:equilibrium}
 \sum_{u\sim v}\omega_{vu}(u-v)=0
 \quad\hbox{at each interior vertex }v.
\end{equation}

To integrate a lifting, write the desired affine height function on a
triangle $\tau$ as
\[
 F_\tau(x)=g_\tau\cdot x+b_\tau,
\]
where $g_\tau\in\mathbb R^2$ is its gradient and $b_\tau\in\mathbb R$
its constant term. We first prescribe the differences between neighboring
triangle planes; consistency around closed paths will be checked below.
Let $J$ be counterclockwise rotation by $\pi/2$. For an oriented interior
edge $v\to u$ with left and right incident triangles $L,R$, prescribe
\begin{equation}\label{eq:plane-jumps}
 g_R-g_L=\omega_{vu}J(u-v),\qquad
 b_R-b_L=-\omega_{vu}J(u-v)\cdot v.
\end{equation}
The gradient difference is perpendicular to the shared edge, so the two
planes have the same slope along it. To make their heights agree at $v$
as well, we require
$(g_R-g_L)\cdot v+(b_R-b_L)=0$, which gives the second equation.
Without that equation the planes could retain a vertical mismatch.
Together the two prescriptions give
\[
 F_R(x)-F_L(x)=\omega_{vu}J(u-v)\cdot(x-v).
\]
For every point $x=v+s(u-v)$ of the shared edge, the right side is zero
because $J(u-v)\cdot(u-v)=0$. Thus the neighboring height functions agree
along the entire edge. The same orthogonality gives
$J(u-v)\cdot u=J(u-v)\cdot v$, so either endpoint can be used in the
constant-term equation. Reversing the edge orientation exchanges $L,R$
and reverses both differences; the prescription is therefore independent
of this orientation choice.

The sum around a small dual loop about an interior vertex is, up to the
orientation sign, $J\sum_u\omega_{vu}(u-v)=0$. Its constant-term sum is
zero as well, since all crossed edges contain the same vertex and it is
the negative scalar product of that vertex with the summed gradient
jump. These loops span the cycle space of the triangle-adjacency dual
graph. Indeed, writing $v_D,e_D,f_D$ for the numbers of vertices, edges,
and triangles, and $b$ for the boundary size, its dimension is
\[
 (e_D-b)-f_D+1=v_D-b,
\]
by Euler's formula. This is exactly the number of interior vertices.
The displayed loops are independent: a linear relation gives equal
coefficients at the endpoints of each interior--interior edge and
coefficient zero at the interior endpoint of an interior--boundary
edge. Each interior component has a path to the boundary in the
triangulation's connected graph, so all coefficients vanish.
Consequently, after choosing a gradient and constant term on one triangle,
the prescribed differences assign them to every other triangle
independently of the dual path. The resulting affine functions agree on
shared edges and hence give a continuous piecewise-affine function on $D$.

A transverse direction crossing $v\to u$ from left to right has negative
scalar product with $J(u-v)$. Equation~\eqref{eq:plane-jumps} thus strictly
decreases the directional slope. Denote the resulting continuous height
function by $F$. On a line segment $x(s)=x_0+sd$ that avoids vertices
and crosses edges transversely, $F(x(s))$ is piecewise linear: its slope
inside a triangle $\tau$ is $g_\tau\cdot d$, and every edge crossing
strictly decreases that slope. A continuous piecewise-linear function
with nonincreasing slopes is concave: for $a<b<c$, its average slope
on $[a,b]$ is at least its average slope on $[b,c]$. Thus the concavity
inequality holds on
each such segment. Any segment in the convex polygon can be approximated
by these generic segments with endpoints in its interior. Continuity
then gives, for arbitrary points $x,y$ of the polygon,
\[
 F((1-s)x+sy)\geq(1-s)F(x)+sF(y),\qquad 0\leq s\leq1.
\]
This proves global concavity, including along segments through vertices
or along edges. Every interior fold is strict because its weight
$\omega_{vu}$ is positive; $F$ remains affine within each triangle.

Lift each vertex $v$ to $(v,F(v))$. Concavity implies that every convex
combination of lifted vertices lies on or below the graph of $F$:
\[
 \sum_i\alpha_iF(v_i)\leq F\!\left(\sum_i\alpha_i v_i\right)
 \qquad\left(\alpha_i\geq0,\quad\sum_i\alpha_i=1\right).
\]
Conversely, a point $x$ in a triangle is a convex combination of its
three vertices, and affinity on that triangle gives the same combination
for $F(x)$. Hence $(x,F(x))$ belongs to the lifted convex hull. These
two observations show that its upper surface is exactly the graph of $F$.

It remains to check that its maximal upper faces are exactly the lifted
triangles. The affine plane over any triangle $\tau$ supports $F$ from
above: apply concavity along a segment starting in the interior of
$\tau$, where $F$ agrees with that plane. The set where equality holds
is convex, since equality at two points forces equality along their
segment. If this contact set contained a point outside $\tau$, its
convex hull with $\tau$ would extend the same plane across an interior
edge of $\tau$, contradicting the strict fold there. Thus each lifted
triangle is a maximal upper face, and the projected upper triangulation
is precisely $D$, with every vertex used.
\end{proof}

\begin{lemma}[Auxiliary planar ear removal]\label{lem:ear}
Suppose an interior vertex $x$ of a planar triangulation has degree
three, with neighbors $A,B,C$. Replacing $xAB,xBC,xCA$ by $ABC$ gives a
straight triangulation $D^-$. Then $D$ is strictly regular if and only
if $D^-$ is strictly regular.
\end{lemma}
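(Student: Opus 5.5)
We prove the two implications separately, working throughout with upper liftings in the fixed coordinates of the vertex set of $D$. Note that $x$ is still a label of $D^-$'s point set. Strict regularity of $D^-$ means that $x$ is a vertex of $D^-$ only if it is used. Since $x$ is not used by $D^-$, we read regularity of $D^-$ on the vertex set $V(D)\setminus\{x\}$; this is the convention used for the radial links. Convexity of the hull is unaffected, because $x$ is interior.

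\emph{($D\Rightarrow D^-$).} Let $w$ be a strict upper lifting of $D$, with concave interpolant $F$. Write $z_x=\theta_AA+\theta_BB+\theta_CC$ with all $\theta_i>0$. Let $\Pi$ be the affine function on $ABC$ interpolating $w_A,w_B,w_C$.

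Strictness of the folds along $xA$, $xB$ and $xC$ means that $w_x>\Pi(z_x)$. To see this, note that the interpolant on the star of $x$ is the minimum of the three planes. At $z_x$ each of these planes equals $w_x$, and each lies strictly above $\Pi$ there, by concavity and strict folding.

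Now define $w^-$ to be $w$ with $x$ deleted. The new interpolant $F^-$ equals $F$ outside $ABC$ and equals $\Pi\le F$ inside it. Concavity of $F^-$ is then the only thing to check, together with strict folds across $AB$, $BC$ and $CA$.

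For the fold across $AB$, let $\tau$ be the triangle of $D$ on the other side of $AB$, with plane $P_\tau$. The fold of $F$ across $AB$ between $xAB$ and $\tau$ was strict. Since $\Pi$ lies below the plane of $xAB$ at $z_x$ and agrees with it on $AB$, the function $\Pi$ tilts further down on that side. Hence $\Pi<P_\tau$ in the interior of $ABC$, and the fold between $\Pi$ and $P_\tau$ is even more strictly concave. The other folds are unchanged. Local concavity across every edge, on a convex polygon, gives global concavity; this is the argument in Theorem~\ref{thm:forest}. The final step of that proof then identifies the upper faces of the lifted hull with the triangles of $D^-$.

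\emph{($D^-\Rightarrow D$).} Let $w^-$ be a strict upper lifting of $D^-$, with interpolant $F^-$. Set $w_x=\Pi(z_x)+\varepsilon$ for small $\varepsilon>0$, and keep all other heights. The interpolant $F_\varepsilon$ of $D$ differs from $F^-$ only inside $ABC$, where it is a tent of height $\varepsilon$ over $\Pi$. The three new folds along $xA$, $xB$ and $xC$ are strict because $\varepsilon>0$.

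The folds along $AB$, $BC$ and $CA$ depend continuously on $\varepsilon$. At $\varepsilon=0$ they are strict, because $D^-$ is strictly regular. They therefore remain strict for all small $\varepsilon$, since there are finitely many strict inequalities. All other folds are untouched. As before, strict local concavity at every interior edge gives global concavity, and the upper faces of the lift are exactly the triangles of $D$, with $x$ used.

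\emph{Main obstacle.} The delicate step is the passage from strict local folds to the statement that the lifted upper hull is exactly the given triangulation. This includes the fact that $x$ is genuinely lifted onto the hull in the second direction, and is strictly below it after deletion in the first. For this step we simply reuse the concluding paragraphs of the proof of Theorem~\ref{thm:forest}: they show that a continuous piecewise-affine function on a convex polygon with strict concave folds at every interior edge has upper hull equal to its graph, and maximal faces equal to its triangles.
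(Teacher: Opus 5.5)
Your proof is correct. The direction $D^-\Rightarrow D$ matches the paper: you lift $x$ by $\varepsilon$ above the plane over $ABC$, which gives three strict folds at $x$, and you keep the finitely many old strict inequalities by taking $\varepsilon$ small.

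For $D\Rightarrow D^-$ you argue differently.

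\emph{The paper's route.} It stays with the lifted convex hull. It picks a \emph{generic} strict lifting and deletes the lifted $x$. Every upper triangle avoiding $x$ keeps its supporting plane. The uncovered region $ABC$ contains no other label, so it is a single upper face. Genericity then excludes flat folds along $AB,BC,CA$.

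\emph{Your route.} You work with interpolants instead. The strict folds at $x$ give $w_x>\Pi(z_x)$. Your justification of this is terse; concretely, the plane of $xAB$ exceeds $w_C$ at $C$ by the strict fold along $xB$, so it exceeds $\Pi$ at $z_x$, since $\theta_C>0$. Consequently $\Pi$ lies below the plane of $xAB$, and hence below $P_\tau$, on the $C$-side of $AB$. Deleting $x$ therefore only sharpens the boundary folds, and the closing argument of Theorem~\ref{thm:forest} finishes the proof.

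\emph{What each buys.} Your route applies to \emph{every} strict lifting of $D$ and needs no perturbation, at the price of re-invoking the local-to-global concavity argument. The paper's route is shorter and purely hull-theoretic, but it leans on a generic choice.

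One slip in your closing remark: your own inequality $w_x>\Pi(z_x)$ shows that the lifted $x$ lies strictly \emph{above} the new face over $ABC$, not below it. This is harmless, because, as in the paper, you regard $D^-$ as a triangulation of $V(D)\setminus\{x\}$.
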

\begin{proof}
The three angles around $x$ fill a neighborhood, so $x$ is strictly
inside $ABC$ and its three triangles fill that triangle. Starting with
a strict lifting of $D^-$, give $x$ a height $\epsilon>0$ above the
plane over $ABC$. The three inner folds become strictly concave. All
other strict inequalities persist for sufficiently small $\epsilon$,
so this is a strict lifting of $D$.

Conversely choose a generic strict upper lifting of $D$, avoiding every
four-label coplanarity. Remove the lifted point $x$. Every old upper
triangle not containing $x$ remains an upper triangle: its supporting
plane still lies above all remaining labels. The only projected region
left to cover is $ABC$, containing no other label. It is consequently
one upper triangle. Genericity excludes a flat fold along its boundary,
so the resulting upper triangulation is precisely $D^-$ and is strict.
\end{proof}
This operation is used only to construct planar heights. It is not a
spatial vertex deletion, contraction, or allowed edge of the flip graph.
In particular a degree-three spatial edge alone does not guarantee a
legal spatial $3\to2$ flip.

\Needspace{8\baselineskip}
\begin{theorem}[Connectivity through seven points]\label{thm:seven}
Every full tetrahedralization on at most seven points has a hull radial
link whose interior graph is a forest. Its fixed-label geometric flip
graph is connected.
\end{theorem}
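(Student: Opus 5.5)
Our plan is to prove the two assertions in order: first the combinatorial claim that some hull vertex has a radial link whose interior graph is a forest, then connectivity by induction on $n=|P|$ through Theorems~\ref{thm:forest}, \ref{thm:radial} and Lemmas~\ref{lem:placing}, \ref{lem:regular}.

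For the forest claim, fix a hull vertex $q$. The interior vertices of its radial link $D$ are the $i_T(q)=m_T(q)-b(q)$ neighbors that are not hull-edge neighbors. Since $m_T(q)\leq n-1\leq 6$ and $b(q)\geq3$, we get $i_T(q)\leq3$. An induced graph on at most two vertices is always a forest. So the only case to handle is $i_T(q)=3$ with the three interior link vertices forming a triangle $xyz$ in $D$. That case forces $n=7$, $b(q)=3$, $m_T(q)=6$, so $q$ is adjacent to every other point.

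To rule it out at every hull vertex, we avoid a direct argument and choose $q$ well. If the hull has at least five vertices, pick a hull vertex $q$ whose hull degree $b(q)\geq4$. Such a vertex exists: a simplicial $3$-polytope with $v\geq5$ vertices has average degree $6-12/v>3$. Then $i_T(q)\leq 6-4=2$ and the link is trivially a forest.

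If the hull is a tetrahedron, then $b(q)=3$ at every hull vertex and there are three interior points $x,y,z$. Suppose every one of the four hull vertices sees $xyz$ as an interior triangle of its link. Then each tetrahedron $qxyz$ with $q$ a hull vertex is a simplex of $T$, since a triangle of the link cones to a tetrahedron. The triangle $xyz$ can lie in at most two tetrahedra of $T$, while here it would lie in four. This contradiction gives a hull vertex with forest link.

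For connectivity, the base case $n=4$ is trivial. For $n\geq5$, assume $\mathcal F(P')$ is connected for all admissible $P'$ with fewer points; this applies to $P\setminus\{q\}$, which is still in general position and has at least four points. Given $T$, choose $q$ as above. Theorem~\ref{thm:forest} applies to the radial link $D$: its points have no three collinear and its hull is the convex polygon $B$. Hence $D$ is strictly upper regular in its actual coordinates. Theorem~\ref{thm:radial} then flips $T$ to placing form at $q$. Lemma~\ref{lem:placing}, using the inductive hypothesis, connects that to a full strictly regular triangulation. Lemma~\ref{lem:regular} connects all such triangulations, so every $T$ lies in one component.

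The main obstacle is the forest claim at exactly seven points. One must check that the tangent-cone section really places non-hull-edge neighbors in the interior, which is stated before Theorem~\ref{thm:radial}, and that the counting in the tetrahedral-hull case is airtight. In particular, we need that $xyz$ triangular in the link at $q$ means $qxyz\in T$, and that a triangle lies in at most two tetrahedra. We expect both to be routine.
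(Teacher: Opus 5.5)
Your proposal is correct and is essentially the paper's proof. Your choice of a hull vertex with $b(q)\geq4$ when $h\geq5$ is the paper's degree-sum bound $(10-n)h\leq12$ in another form, and your tetrahedral-hull contradiction and the induction through Theorems~\ref{thm:forest}, \ref{thm:radial} and Lemmas~\ref{lem:placing}, \ref{lem:regular} match it step for step.

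The one point to prove rather than assume is that a non-forest interior graph on $\{x,y,z\}$ only gives a $3$-cycle, not yet a link face. It is a face because no boundary vertex of $D$ can lie inside the triangle $z_xz_yz_z\subset B$, since such a vertex is an extreme point of $B$. There is also no other interior vertex to subdivide that triangle. Only after this does coning give $qxyz\in T$ at all four hull vertices.
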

\begin{proof}
Let $n=|P|$ and $h$ be the number of hull vertices. Suppose every hull
link has an interior cycle. Then $i_T(q)\geq3$ at every hull vertex,
whereas $m_T(q)\leq n-1$, giving $b(q)\leq n-4$. The simplicial hull has
$3h-6$ edges, so
\begin{equation}\label{eq:hdegree}
 6h-12=\sum_{q\text{ hull}}b(q)\leq h(n-4),\qquad (10-n)h\leq12.
\end{equation}
For $n\leq6$ this contradicts $h\geq4$. For $n=7$ it forces $h=4$.
All three global interior labels, say $a,b,c$, must then be interior
neighbors at every hull pivot, and their cycle is the triangle $abc$.
No projected boundary vertex can lie inside this triangle. There is no
other interior label to triangulate it, so it is a face of every hull
link. The four hull vertices would therefore give four tetrahedra with
the same triangular face $abc$. A triangular face in a geometric
three-dimensional triangulation has at most two tetrahedral cofaces,
a contradiction.

The forest criterion and Theorem~\ref{thm:radial} reach placing.
Induct on $n$, with the unique tetrahedron at $n=4$ as base case.
The placing antistar is a full tetrahedralization of
$\operatorname{conv}(P\setminus\{q\})$. By induction it has a flip path
to a full regular tetrahedralization. The fixed triangular hull
boundary of that smaller convex hull is unchanged, so this path glues
to the fixed cones from $q$ over its visible facets. Lemma~\ref{lem:placing} supplies a regular placing extension. Thus every state
reaches the full regular locus, which is connected by Lemma~\ref{lem:regular}. Throughout the
glued path $q$ remains active; the smaller point set is only an auxiliary
object in the proof.
\end{proof}

\section{Eight points with at least five hull vertices}
\label{sec:hfiveplus}
\begin{theorem}\label{thm:hfiveplus}
For eight points with $h\in\{5,6,7,8\}$ hull vertices, every full
tetrahedralization has a strictly regular actual hull radial link.
It reaches placing in at most $35$ flips and belongs to the connected
full regular component.
\end{theorem}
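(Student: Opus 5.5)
The plan is to show that some hull vertex $q$ has a strictly regular radial link. Theorem~\ref{thm:radial} then makes $q$ placing-removable. Theorem~\ref{thm:seven} together with Lemmas~\ref{lem:placing} and~\ref{lem:regular} then places $T$ in the connected full regular component. The flip bound is immediate once regularity holds: $m_T(q)\leq 7$ for eight points, so Theorem~\ref{thm:radial} uses at most $\binom74=35$ flips. The whole task is therefore to find a hull vertex whose actual radial link $D$ is strictly regular. The tools are Theorem~\ref{thm:forest}, applied directly, and Lemma~\ref{lem:ear}, applied after peeling degree-three interior vertices until the interior graph becomes a forest.

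\emph{Counting step.} Suppose no hull link has an interior graph that is a forest. Then every hull vertex satisfies $i_T(q)\geq3$, so $b(q)\leq m_T(q)-3\leq4$. Inequality \eqref{eq:hdegree} with $n=8$ gives $6h-12\leq4h$, hence $h\leq6$, so $h\in\{7,8\}$ is settled by Theorem~\ref{thm:forest} directly. For $h=6$ the inequality is tight. Then every hull vertex has $b(q)=4$, $m_T(q)=7$ and $i_T(q)=3$, and its three interior link vertices span a triangle. These are the two global interior points $a,b$ and one hull vertex $c_q$ not adjacent to $q$ on the hull. All seven other labels are neighbors, and the boundary vertices lie on $\partial B$, so no label projects inside this triangle. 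Hence $abc_q$ is a link face and $qabc_q\in T$ for each of the six hull vertices $q$. I would then derive a contradiction from the star of the edge $ab$, or from the triangle $abc$ for a repeated $c=c_q$. By pigeonhole some $c$ is shared, giving at least two, and generally too many, tetrahedral cofaces on faces through $ab$. This mirrors the coface argument of Theorem~\ref{thm:seven}. If this fails, the fallback is Lemma~\ref{lem:ear}: an interior vertex of degree three in the link can be removed, leaving a link with two interior vertices, which is a forest.

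\emph{Main obstacle: $h=5$.} Here $i_T(q)\geq3$ and $b(q)\leq4$ are both possible at every hull vertex, with three global interior points available. First I would iterate Lemma~\ref{lem:ear}. Any interior link vertex of degree three can be stripped, and regularity transfers back. So a bad link must contain an interior cycle in which no vertex is an ear, which with $m\leq7$ leaves few abstract disk types: for instance, a $4$-cycle or a triangle whose vertices have degree at least four. For these residual types I would enumerate, at each hull vertex, which labels lie inside $B$ and which triangles of the cycle are forced as link faces. I would then use compatibility across the five hull vertices: a triangle $xyz$ forced in the links of two or more pivots gives tetrahedra $qxyz$ on the same face, and at most two may exist. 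The aim is to show that two pivots are always in conflict. This case analysis is the step I expect to be longest and most delicate. If a residual configuration survives the compatibility argument, the last resort is to exhibit explicit positive equilibrium weights as in the proof of Theorem~\ref{thm:forest}, now with one cycle. A single consistency condition around that cycle would have to be checked against the actual coordinates.
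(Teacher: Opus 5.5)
\textbf{Gap in the case $h=6$.} Your cases $h\in\{7,8\}$ and your reduction through Theorem~\ref{thm:radial}, Theorem~\ref{thm:seven}, and Lemmas~\ref{lem:placing} and~\ref{lem:regular} are fine. The case $h=6$, however, is not proved.

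Your pigeonhole step is false. If every hull vertex has $b(q)=4$, then non-adjacency in the six-vertex hull graph is a perfect matching, so the hull is octahedral. Hence $c_q$ is the unique opposite vertex $q'$, and $q\mapsto c_q$ is a fixed-point-free involution. No $c$ is repeated. Your six tetrahedra $qabc_q$ collapse to three tetrahedra $qq'ab$, which pairwise share only the edge $ab$. Each triangle $abq$, $aqq'$, $bqq'$ receives one coface from this list, and an interior edge may lie in any number of tetrahedra. So neither a triangle count nor the star of $ab$ yields a contradiction from incidence alone: the link of $ab$ can simply be a $6$-cycle alternating the three diagonals $qq'$ with hull edges. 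Even a repeated $c$ would give only two cofaces of $abc$, which is allowed.

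Your fallback through Lemma~\ref{lem:ear} assumes exactly what has to be shown, namely that some hull link has an interior vertex of degree three.

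\textbf{The missing idea.} This is how the paper argues: target nonregularity rather than the failure of the forest criterion, and find the degree-three vertex in the link of an interior point.
\begin{enumerate}
\item The tetrahedra $qq'ab$ make $\lk_T(a)$ a triangulated $2$-sphere on $b$ and the six hull labels, in which $b$ is adjacent to all six.
\item Hence $\lk_T(ab)$ is a $6$-cycle, and the complementary disk is a triangulated hexagon with no interior vertices.
\item An ear tip $q$ of that hexagon lies in one hexagon triangle and two triangles of the $b$-star, so it has degree three in $\lk_T(a)$. Equivalently, $a$ has degree three in the radial link at $q$.
\item Removing $a$ by Lemma~\ref{lem:ear} leaves the interior labels $b,q'$, which form a forest. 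So that link is strictly regular by Theorem~\ref{thm:forest}, contradicting the assumption that all links are nonregular.
\end{enumerate}

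\textbf{The case $h=5$.} Your plan there is not wrong, but it misplaces the difficulty and is much longer than needed. The hull is a triangular bipyramid. Each of its three equatorial vertices has $b(q)=4$, so its only possible interior link labels are the three global interior points $a,b,c$. If none of those three links is a forest, $abc$ is a face of each of them. That gives three tetrahedral cofaces of $abc$, which is impossible. No enumeration of residual disk types and no one-cycle equilibrium weights are required.
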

\begin{proof}
If $h=7$ or $8$, the hull degree sum $6h-12$ supplies a vertex with
$b(q)\geq5$. Since $m_T(q)\leq7$, its radial link has at most two interior
vertices. Its interior graph is a forest, so it is strictly regular.

If $h=5$, the simplicial hull is a triangular bipyramid. Each of its
three equatorial vertices has four hull neighbors. If all three of
these links failed the forest criterion, each would contain the same
three global interior labels $a,b,c$ and their triangular cycle. As in
Theorem~\ref{thm:seven}, this cycle must be a link face. The three
actual tetrahedra with apices at the equatorial vertices would give
three cofaces of $abc$, which is impossible. Thus a forest link exists.

For $h=6$, write $x,y$ for the global interior labels and suppose all
hull links are nonregular. The forest criterion gives $i_T(q)\geq3$ and
$b(q)\leq4$ at every hull vertex. Their degrees sum to $24$, so all are
four. The complement of this six-vertex graph is a perfect matching;
only now may we conclude that the hull is octahedral. Let $q'$ denote
the vertex opposite $q$. The only possible interior radial-link labels
at $q$ are $x,y,q'$. Nonregularity forces their triangular cycle and,
since there are no other interior link vertices, their triangular face.
Thus the three tetrahedra $qq'xy$, one for each opposite pair, are
actual tetrahedra of $T$.

The link of the interior point $x$ is a triangulated two-sphere on $y$
and the six hull labels. The displayed tetrahedra imply that $y$ is
adjacent in this sphere to every hull label. Hence its link is a
six-cycle, and the complementary closed disk is a triangulated
hexagon with no interior vertices. A triangulated polygon has an ear:
its triangle-adjacency graph is a tree, whose leaf gives a triangle
with two boundary edges. Let $q$ be the tip of such an ear. Exactly
one triangle of the complementary disk and two triangles of the
$y$-star meet at $q$. Hence $q$ has degree three in
$\operatorname{lk}_T(x)$; equivalently the edge $xq$ belongs to exactly
three tetrahedra and $x$ has degree three in the radial link at $q$.

Apply Lemma~\ref{lem:ear} to remove $x$ in that planar lifting
construction. The remaining disk has only the two interior labels
$y,q'$ and is regular by Theorem~\ref{thm:forest}. Reinsertion gives a
strict lifting of the original actual link, contradicting its assumed
nonregularity. This reasoning has used no spatial edge contraction.

In each case $m_T(q)\leq7$, so Theorem~\ref{thm:radial} gives at most
$\binom74=35$ flips to placing. The seven-point antistar induction and
regular placing extension then reach the connected full regular locus.
\end{proof}

\begin{table}[ht]
\centering\small
\begin{tabular}{c p{0.51\linewidth} c}
\hline
Hull vertices & Reason a regular link exists & Flips to placing\\
\hline
$8,7$ & A hull degree at least five leaves at most two interior link vertices. & $\leq35$\\
$5$ & Three equatorial cyclic links would give three cofaces of one face. & $\leq35$\\
$6$ & All nonregular links force an octahedral hull; its universal link neighbor supplies an auxiliary ear. & $\leq35$\\
\hline
\end{tabular}
\caption{The eight-point cases before the tetrahedral hull. All rows use
forest lifting and regular radial escape; connectivity then uses the
self-contained seven-point result and the full regular locus.}
\label{tab:hfiveplus}
\end{table}

\section{Positive-target preparation for a triangular radial link}
\label{sec:octprep}

A tetrahedral hull is the only distribution still requiring an argument.
Its radial links have triangular boundary. With three interior link
vertices, the possible obstruction is an octahedral disk; with four,
compatibility between different hull links will supply a regular one.
The first step uses a geometric feature specific to a tetrahedral hull:
the reciprocal target lies strictly above its boundary affine plane.

\begin{lemma}[Positive gauged target]\label{lem:positive-target}
Let the hull be $qABC$, and use the radial coordinates
$z_p=p/\lambda_p$, $\eta_p=1/\lambda_p$ after translating $q$ to zero.
If $L$ is affine on the section and takes the values
$\eta_A,\eta_B,\eta_C$ at its boundary vertices, then
\[
                 \eta_x-L(z_x)>0
\]
for every interior label $x$ adjacent to $q$.
\end{lemma}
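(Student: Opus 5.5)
The plan is to homogenize $L$ to a linear functional on $\mathbb R^3$. Since $q=0$ and $A,B,C$ are linearly independent, there is a unique linear functional $\Phi$ with $\Phi(A)=\Phi(B)=\Phi(C)=1$. Its restriction to the section plane $\ell=1$ is affine. At $z_A=A/\lambda_A$ it takes the value $\Phi(A)/\lambda_A=\eta_A$, and similarly at $z_B$ and $z_C$. An affine function on the plane is determined by its values at three affinely independent points, so $L(z)=\Phi(z)$ for every $z$ with $\ell(z)=1$. In particular, $L(z_x)=\Phi(x)/\lambda_x=\eta_x\Phi(x)$.

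The claim then reduces to
\[
 \eta_x-L(z_x)=\eta_x\bigl(1-\Phi(x)\bigr)>0,
\]
and since $\eta_x=1/\lambda_x>0$, this is equivalent to $\Phi(x)<1$. Geometrically, $\{\Phi=1\}$ is the plane of the facet $ABC$, and $\{\Phi<1\}$ is the open side containing $q=0$. Every point of $\conv(P)=\conv\{q,A,B,C\}$ has the form $\beta_q q+\beta_AA+\beta_BB+\beta_CC$ with nonnegative coefficients summing to one. Because $q=0$, its $\Phi$-value is $\beta_A+\beta_B+\beta_C=1-\beta_q\le 1$.

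The only step needing care is strictness. Equality $\Phi(x)=1$ would force $\beta_q=0$, so $x$ would lie in the closed triangle $ABC$. If $x$ is one of $A,B,C$, it is a boundary vertex of the section, not an interior label. Otherwise $x$ is a fourth point of $P$ on the plane of $ABC$, which is excluded by the hypothesis that no four points are coplanar. Thus every interior label $x$, and in fact every label of $P\setminus\{q,A,B,C\}$, satisfies $\Phi(x)<1$, which gives the strict inequality.

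Adjacency to $q$ plays no role beyond ensuring that $z_x$ is defined and lies in the section. I expect no real obstacle. The one point to get right is the identification of $L$ with the restriction of $\Phi$, which uses that $z_A,z_B,z_C$ are the vertices of the triangular section $B$ from Section~\ref{sec:radial}.
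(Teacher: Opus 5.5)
Your proof is correct and essentially the paper's: your $\Phi(x)$ equals the paper's $\sum_{v}\alpha_v$, the spatial barycentric weights of $x$ on $A,B,C$, so both arguments reduce the gap to $\eta_x(1-\Phi(x))=\alpha_q/\lambda_x>0$. The differences are minor: you evaluate $L(z_x)$ by extending $L$ to the linear functional $\Phi$ rather than by rescaling the section barycentrics $\theta_v$, and you derive strictness explicitly from the no-four-coplanar hypothesis, where the paper simply says that $x$ is strictly inside the hull.
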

\begin{proof}
Write $z_x=\theta_Az_A+\theta_Bz_B+\theta_Cz_C$, where the
$\theta$'s are positive and sum to one. Then
\[
 x=\sum_{v\in\{A,B,C\}}\frac{\theta_v\eta_v}{\eta_x}\,v.
\]
Put $\alpha_v=\theta_v\eta_v/\eta_x$ for $v\in\{A,B,C\}$.
Since $q=0$ and $q,A,B,C$ are affinely independent, these are exactly
the coefficients of $A,B,C$ in the spatial barycentric expression for
$x$. Its remaining coefficient is
$\alpha_q=1-\sum_v\alpha_v$. The label $x$ is strictly inside the
tetrahedral hull, so all four spatial coefficients, in particular
$\alpha_q$, are positive. Affineness of $L$ and
$\sum_v\theta_v=1$ now give the exact gap
\[
 \eta_x-L(z_x)
 =\eta_x-\sum_v\theta_v\eta_v
 =\eta_x\Bigl(1-\sum_v\alpha_v\Bigr)
 =\frac{\alpha_q}{\lambda_x}>0.
\]
Thus the positive gap records the spatial barycentric weight at the
pivot, divided by the positive radial scale of $x$. The coefficients
$\theta_v$ describe the section point, whereas the $\alpha_v$ describe
the original spatial point; the displayed rescaling is what relates
these two convex combinations. Notice that the assertion concerns
$\eta-L$, rather than positivity of $\eta$ alone.
\end{proof}

We next record the lifting calculation with all sign conditions explicit.
The \emph{octahedral disk} has boundary $e_0e_1e_2$ and interior vertices
$u_0,u_1,u_2$; its only missing boundary--interior edges are $e_i u_i$.
Its triangles are the seven choices of one vertex from each pair
$\{e_i,u_i\}$ other than the outer triangle.

Use the three projected boundary vectors as a linear basis. In these
coordinates $e_0,e_1,e_2$ are the standard coordinate vectors and the
radial section is $X_0+X_1+X_2=1$. Write the barycentric coordinates of
the $j$th interior point as a column $\beta_j=(\beta_{0j},\beta_{1j},
\beta_{2j})^{\mathsf T}$, with all entries positive and their sum one.
A positive multiple of this column represents the same ray from $q$;
dividing by its coordinate sum recovers the same point on the section.
This freedom to rescale a representative is what we mean by
\emph{homogeneous coordinates}. Divide column $j$ by its positive entry
$\beta_{jj}$ and denote the resulting representative by $u_j$. Then
\[
 U=(u_0\ u_1\ u_2)=
 \begin{pmatrix}1&a&b\\c&1&d\\e&f&1\end{pmatrix},
 \qquad a,b,c,d,e,f>0.
\]
The diagonal entries are one by construction. The columns generally do
not lie on the radial section: if $s_j=\sum_iU_{ij}=1/\beta_{jj}$, the
actual projected point has coordinates $u_j/s_j=\beta_j$. Thus this
normalization changes only the representatives, not the projected
points or the original spatial labels. Positive column rescaling also
preserves the signs of the orientation determinants used below.

Lifting heights must be rescaled by the same factors. If $H_j$ is the
height at the actual projected point, its normalized height is
$h_j=s_jH_j$: the entire lifted representative is
$(u_j,h_j)=s_j(\beta_j,H_j)$. Dividing by $s_j$ recovers both the point
and its height. An affine height function on the section extends to a
linear form in these coordinates, so its supporting inequalities can
be evaluated on the normalized representatives. This makes the cap
planes in the next proof particularly simple: with boundary heights
zero, the plane through $e_1,e_2,u_0$ has height $h_0X_0$, since its
first coordinate is zero at $e_1,e_2$ and one at $u_0$.

Put
\[
 \rho=ac-1,\quad \sigma=be-1,\quad \tau=df-1,\qquad
 C_+=ade,\quad C_-=bcf.
\]
\[
 \Delta=\det U=1+C_++C_--ac-be-df.
\]
The orientations of the three annular triangles
$(e_2,u_1,u_0)$, $(e_1,u_0,u_2)$, $(e_0,u_2,u_1)$, and of the
central triangle $(u_0,u_1,u_2)$ give
\begin{equation}\label{eq:octorient}
                  \rho,\sigma,\tau,\Delta>0.
\end{equation}
Indeed, orient the section so that the boundary order $e_0,e_1,e_2$
is positive. The displayed orders of the four interior triangles are
the compatible positive orders in the actual disk: adjacent triangles
traverse their shared edge in opposite directions. On the section,
the determinant of three columns has the sign of their planar
orientation. The same remains true for our representatives because
each column was rescaled by a positive factor. Expanding the four
determinants gives respectively $ac-1$, $be-1$, $df-1$, and $\det U$.
Their positivity therefore uses the noncrossing geometric realization
of the disk, not just positivity of the six entries of $U$.

\begin{lemma}[Octahedral lifting criterion]\label{lem:octregular}
For the actual projected coordinates of an octahedral disk, strict upper
regularity is equivalent to $C_+>1$ and $C_->1$.
\end{lemma}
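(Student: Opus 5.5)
The plan is to normalize the boundary heights to zero and reduce strict upper regularity to finitely many linear fold inequalities in the three normalized interior heights $h_0,h_1,h_2$. Then $C_\pm>1$ become the solvability conditions for two cyclic chains among these inequalities.

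\emph{Normalization.} Subtracting an affine function does not change which triangulation is the upper one, so I assume the boundary heights are zero. A strict concave interpolant vanishing at $e_0,e_1,e_2$ is nonaffine, hence strictly positive at interior points, so $h_j>0$. Conversely, a continuous piecewise-affine function with strictly concave folds across every interior edge is globally concave, by the segment argument in the proof of Theorem~\ref{thm:forest}. Strict regularity is therefore equivalent to the nine local fold inequalities: for each interior edge, the plane of one incident triangle lies strictly above the opposite vertex of the other. As explained before the statement, these can be evaluated on the normalized representatives $(u_j,h_j)$.

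\emph{Boundary--interior folds.} The cap triangle at $u_i$ has plane $h_iX_i$. Its two interior edges $e_ku_i$ separate it from annular triangles whose third vertex is $u_j$, $j\ne i$. The folds there read $h_iU_{ij}>h_j$:
\[
 h_1<ah_0,\ h_2<bh_0,\ h_0<ch_1,\ h_2<dh_1,\ h_0<eh_2,\ h_1<fh_2 .
\]
Chaining gives $h_1<ah_0<aeh_2<adeh_1$ and $h_1<fh_2<fbh_0<bcfh_1$. With $h_1>0$ this forces $C_+>1$ and $C_->1$, which proves necessity. The two-cycles reproduce $\rho,\sigma,\tau>0$, already known from \eqref{eq:octorient}.

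\emph{Interior--interior folds and sufficiency.} The plane of the annular triangle $(e_2,u_1,u_0)$ is $\alpha X_0+\beta X_1$ with $\alpha=(ch_1-h_0)/\rho$ and $\beta=(ah_0-h_1)/\rho$. Its fold with the central triangle requires $b(ch_1-h_0)+d(ah_0-h_1)>\rho h_2$; the other two are cyclic analogues. For sufficiency, set $y_j=\log h_j$. The six ratio inequalities say that the differences $y_j-y_i$ lie in open intervals, a difference-constraint system on three nodes. Its only consistency conditions are that the cycles have positive log-weight, namely $ac,be,df>1$ and $C_\pm>1$, so the hypotheses make this open region nonempty.

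\emph{Main obstacle.} I expect the hardest step to be showing that the three central folds hold at some point of that region, or are implied by the six ratio inequalities together with $\Delta>0$. I would first try to write each central fold, after multiplying by $\rho$, $\sigma$ or $\tau$, as a positive combination of the ratio slacks $ah_0-h_1$, $ch_1-h_0$, etc., plus a multiple of $\Delta$. The identity $\Delta=1+C_++C_--ac-be-df$ suggests such a decomposition. If no sign-definite identity appears, I would instead evaluate the folds at a particular explicit point, such as $h=U^{\mathsf T}$-weighted equal slacks or a boundary corner of the difference polytope perturbed inward, and verify the inequalities there using only \eqref{eq:octorient} and $C_\pm>1$.
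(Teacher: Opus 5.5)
\textbf{There is a genuine gap: sufficiency is left unproved.} Your necessity argument is correct and coincides with the paper's. The cap planes $h_iX_i$ give the six fold inequalities $U_{ij}h_i>h_j$, and the two directed three-cycles give $C_\pm>1$.

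In the sufficiency direction you show only that the six cap--annular fold inequalities can hold simultaneously. You leave the three central folds unresolved. Your first proposed remedy cannot work, because those six inequalities do not imply the central folds, even together with $\rho,\sigma,\tau,\Delta>0$. Take $a=c=2$, $b=d=1$, $e=f=2$:
\begin{itemize}
\item Then $\rho=3$, $\sigma=\tau=\Delta=1$ and $C_\pm=4$.
\item All seven triangles are positively oriented, so this is a genuine octahedral disk. Its section points are $(1,2,2)/5$, $(2,1,2)/5$ and $(1,1,1)/3$.
\item For $h=(1,1,h_2)$, the six ratio inequalities say exactly $\tfrac12<h_2<1$.
\item Your central fold $b(ch_1-h_0)+d(ah_0-h_1)>\rho h_2$ reads $2>3h_2$, so it fails for $\tfrac23\le h_2<1$.
\end{itemize}
So the cone cut out by the cap folds is strictly larger than the lifting cone. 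The real work is to show that it meets the cone of the central folds. Note also that your feasibility argument never uses $\Delta>0$, which must enter somewhere in sufficiency.

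\textbf{The missing idea} is to parametrize heights by the central plane instead of by the vertex heights. Put $h=U^{\mathsf T}\mu$, so the central triangle has plane $\mu\cdot X$. Its fold with the annular triangle containing $e_k$ holds exactly when that plane lies above the zero height at $e_k$, that is, when $\mu_k>0$. In the example above, $\mu=(2h_2-1,\,2h_2-1,\,2-3h_2)$. Taking $\mu>0$ therefore makes all three central folds automatic.

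The six cap folds then become linear conditions such as $\rho\mu_1+(ae-f)\mu_2>0$. Those with a negative second coefficient are ratio bounds between the $\mu_i$. Their cycle tests reduce exactly to $\Delta>0$ and $C_\pm>1$, through these identities:
\begin{itemize}
\item $\rho\sigma-(f-ae)(d-bc)=\Delta$, and its cyclic analogues;
\item $\rho\sigma\tau-(b-ad)(c-de)(f-ae)=(C_+-1)\Delta$;
\item $\rho\sigma\tau-(a-bf)(d-bc)(e-cf)=(C_--1)\Delta$.
\end{itemize}
Your fallback of an ``$h=U^{\mathsf T}$-weighted'' point gestures toward this, but this computation is the whole of the sufficiency proof. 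Once it is done, your local-to-global concavity argument over the nine interior edges finishes the lemma.
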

\begin{proof}
Subtract the affine boundary height, making the outer heights zero, and
scale heights along with their homogeneous representatives. Write $h_i$
for the inner heights. In a strict concave lifting, each $h_i>0$.
The plane on the cap opposite $e_i$ is $h_i$ times coordinate $i$.
Its strict supporting inequalities at the other inner vertices are
\begin{equation}\label{eq:capineq}
                         U_{ij}h_i>h_j\qquad(i\ne j).
\end{equation}
For example, the inequalities
$ah_0>h_1$, $dh_1>h_2$, and $eh_2>h_0$ multiply to
$ade\,h_0h_1h_2>h_0h_1h_2$. Cancelling the positive product of
heights gives $C_+>1$. Multiplying the other three inequalities,
$bh_0>h_2$, $fh_2>h_1$, and $ch_1>h_0$, gives $C_->1$.

For sufficiency choose positive numbers $\mu_0,\mu_1,\mu_2$ and set
$h=U^{\mathsf T}\mu$. The central triangle then has supporting plane
$\mu\cdot z$, strictly above each outer lifted point. The six conditions
\eqref{eq:capineq} become
\begin{align*}
 \rho\mu_1+(ae-f)\mu_2&>0,&
 \sigma\mu_2+(bc-d)\mu_1&>0,\\
 \rho\mu_0+(cf-e)\mu_2&>0,&
 \tau\mu_2+(ad-b)\mu_0&>0,\\
 \sigma\mu_0+(de-c)\mu_1&>0,&
 \tau\mu_1+(bf-a)\mu_0&>0.
\end{align*}
An inequality whose second coefficient is nonnegative is automatic.
Each other inequality prescribes a positive strict ratio bound between
two $\mu$'s. Such bounds have a simultaneous positive solution exactly
when the product of the lower-bound factors on every directed cycle is
less than one. Here is a direct sufficiency proof of that elementary
fact. Enlarge every factor by the same number $\delta>1$, sufficiently
close to one that all simple-cycle products remain below one. Introduce
a source with an edge of factor one to each vertex. At each vertex take
the largest product of factors along a path from the source. Removing
a cycle increases that product, so a maximum is attained on a simple
path. These maxima satisfy the enlarged weak bounds, hence the original
strict bounds.

In the present three-vertex ratio graph, the possible two-cycle tests are
exactly
\begin{align*}
 \rho\sigma-(f-ae)(d-bc)&=\Delta,\\
 \rho\tau-(e-cf)(b-ad)&=\Delta,\\
 \sigma\tau-(c-de)(a-bf)&=\Delta.
\end{align*}
The two possible three-cycle tests are
\begin{align*}
 \rho\sigma\tau-(a-bf)(d-bc)(e-cf)&=(C_--1)\Delta,\\
 \rho\sigma\tau-(b-ad)(c-de)(f-ae)&=(C_+-1)\Delta.
\end{align*}
A test is needed only when all its numerator factors are positive.
The displayed identities and \eqref{eq:octorient} make every needed
test strict when $C_+,C_->1$. Thus suitable positive $\mu$'s exist.
The central triangle and all three caps are strict upper faces. For
the central plane this follows from $\mu_i>0$ at each outer vertex;
for a cap, \eqref{eq:capineq} gives strictness at the other inner
vertices, and $h_i>0$ gives strictness at the opposite outer vertex.
Thus each of these four supporting planes contains just the three
lifts defining its triangle. Their projections are the corresponding
four triangles of the specified geometric disk.

The upper hull projects onto the whole outer triangle. Triangulate any
of its remaining faces using their vertices, keeping the four strict
triangular faces. Their complement in the specified disk consists of
three triangular regions with no further vertices. Their boundary
edges are already edges of the four fixed faces, so an upper
triangulation cannot cross these boundaries. It must fill each region
by its unique triangle. Every edge of a remaining triangle borders
one of the four strict faces. Across such an edge, the other triangle's
third vertex lies strictly below that face's plane; hence the two
triangles cannot be coplanar and their upper fold is strict. This also
shows that no nontriangular upper face was subdivided by a flat edge.
The resulting upper triangulation is exactly the specified disk.
\end{proof}

\begin{figure}[htbp]\centering
\begin{tikzpicture}[scale=0.90,every node/.style={font=\small},vertex label/.style={inner sep=1.2pt,outer sep=0pt}]
\coordinate (A) at (0.00000,0.00000);
\coordinate (B) at (6.00000,0.00000);
\coordinate (C) at (3.00000,5.00000);
\coordinate (a) at (3.60000,2.00000);
\coordinate (b) at (2.40000,2.00000);
\coordinate (c) at (3.00000,1.00000);
\fill[blue!12] (a)--(b)--(c)--cycle;
\draw (A)--(B);
\draw (A)--(C);
\draw (A)--(b);
\draw (A)--(c);
\draw (B)--(C);
\draw (B)--(a);
\draw (B)--(c);
\draw (C)--(a);
\draw (C)--(b);
\draw (a)--(b);
\draw (a)--(c);
\draw (b)--(c);
\draw[red!75!black,very thick] (C)--(a);
\draw[teal!80!black,dashed,very thick] (B)--(b);
\fill (A) circle (1.7pt);
\node[vertex label,anchor=north east] at ([xshift=-3pt,yshift=-3pt]A) {$e_0$};
\fill (B) circle (1.7pt);
\node[vertex label,anchor=north west] at ([xshift=3pt,yshift=-3pt]B) {$e_1$};
\fill (C) circle (1.7pt);
\node[vertex label,anchor=south] at ([xshift=0pt,yshift=4pt]C) {$e_2$};
\fill (a) circle (1.7pt);
\node[vertex label,anchor=west] at ([xshift=5pt,yshift=0pt]a) {$u_0$};
\fill (b) circle (1.7pt);
\node[vertex label,anchor=east] at ([xshift=-5pt,yshift=0pt]b) {$u_1$};
\fill (c) circle (1.7pt);
\node[vertex label,anchor=north] at ([xshift=0pt,yshift=-5pt]c) {$u_2$};
\end{tikzpicture}
\caption{Schematic octahedral radial disk. A cycle exchange replaces the old cross edge $e_2u_0$ (solid red) by $e_1u_1$ (dashed green). The determinant and target inequalities in the proof, not the drawing, establish convexity and direction.}\label{fig:octahedral}
\end{figure}

\Needspace{8\baselineskip}
\begin{lemma}[Positive-target exchange]\label{lem:octexchange}
If an octahedral disk is not strictly regular, every positive inner
height vector with zero boundary heights and nonzero circuit values
admits a geometrically convex diagonal exchange that strictly raises its
piecewise-affine interpolant. The new disk is strictly regular.
\end{lemma}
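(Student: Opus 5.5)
The plan is to turn the failure of regularity into a failing cap inequality, and to read the exchange off that failure. Suppose the octahedral disk is not strictly regular. By Lemma~\ref{lem:octregular}, $C_+\le1$ or $C_-\le1$; by the symmetry of the disk I will assume $C_+=ade\le1$. Take the given positive inner heights $h_0,h_1,h_2$ with zero boundary heights, scaled to the normalized representatives as before. The three cap inequalities $ah_0>h_1$, $dh_1>h_2$, $eh_2>h_0$ multiply to $C_+h_0h_1h_2>h_0h_1h_2$, so they cannot all hold. The circuit values are nonzero, so none of them is an equality, and at least one strictly fails. Again by cyclic symmetry, say $ah_0<h_1$.

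\emph{Step 1: interpret the failing inequality.} The cap plane through $e_1,e_2,u_0$ is $h_0X_0$. Its value at the representative $u_1=(a,1,f)$ is $ah_0$, so $ah_0<h_1$ says that the lift of $u_1$ lies strictly above that cap plane. The cross edge $e_2u_0$ is shared by the cap $e_1e_2u_0$ and the annular triangle $e_2u_1u_0$. Their union is the quadrilateral $e_1u_0u_1e_2$ of Figure~\ref{fig:octahedral}. Its two triangulations differ by the planar circuit on $\{e_1,e_2,u_0,u_1\}$. The sign of that circuit value says that the interpolant using the diagonal $e_1u_1$ lies strictly above the one using $e_2u_0$ on the quadrilateral. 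This is the standard tent argument already used in Theorem~\ref{thm:radial}.

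\emph{Step 2: geometric convexity of the quadrilateral.} This is the step I expect to be the main obstacle. The points $e_1$ and $u_1$ are on opposite sides of the line $e_2u_0$ because the two triangles are adjacent. What remains is that $e_2$ and $u_0$ lie on opposite sides of the line $e_1u_1$. Expanding the relevant $3\times3$ determinants in the columns $e_1,u_1,u_0$ and $e_1,u_1,e_2$, I expect this to reduce to a single sign condition, most likely $ae-f>0$. That is exactly the coefficient appearing in the ratio inequality $\rho\mu_1+(ae-f)\mu_2>0$ of Lemma~\ref{lem:octregular}.

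I would try to show that if the quadrilateral were nonconvex (reflex at $u_0$ or at $e_2$), then the lift of $u_1$ could not lie above the plane $h_0X_0$. The idea is to write $u_1$ in terms of $e_1,e_2,u_0$ when it lies in their cone. The reflex case makes this combination nonnegative with the $u_0$-coefficient at least $a$, and combining it with $h_1>ah_0$, $h_0>0$ and zero boundary heights should give a contradiction.

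If this direct argument breaks, I would use the cyclic structure instead. I would select, among the failing inequalities of the cycle, one whose quadrilateral is convex. I would argue that all three quadrilaterals being nonconvex forces $f\ge ae$, $b\ge ad$ (or the analogous triple), and that the resulting product inequality contradicts $\Delta>0$ or $\rho,\sigma,\tau>0$ from \eqref{eq:octorient}.

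\emph{Step 3: the exchange raises the interpolant and the new disk is regular.} Once convexity is known, replacing $e_2u_0$ by $e_1u_1$ is a geometrically convex diagonal exchange, and Step~1 shows that it strictly raises the interpolant. For regularity of the new disk, note that in the old disk $u_0$ is adjacent to $e_1,e_2,u_1,u_2$ and misses only $e_0$. After the exchange it loses $e_2$, so $u_0$ becomes an interior vertex of degree three with neighbors $e_1,u_1,u_2$. By Lemma~\ref{lem:ear} it suffices to check the disk obtained by removing $u_0$. That disk has only the two interior vertices $u_1,u_2$, so its interior graph is a forest and it is strictly regular by Theorem~\ref{thm:forest}. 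Reinserting $u_0$ via Lemma~\ref{lem:ear} shows that the new disk is strictly regular.
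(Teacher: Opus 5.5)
Your outline coincides with the paper's. It uses the same cycle $C_+=ade\le1$ and the same failing cap inequality $ah_0<h_1$. It exchanges $e_2u_0$ for $e_1u_1$ and uses the same tent comparison. It then removes the degree-three vertex $u_0$ and applies Theorem~\ref{thm:forest} and Lemma~\ref{lem:ear}. The gap is Step~2, which is the real content of the lemma, and as planned it would fail.

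\textbf{Your guessed condition has the wrong sign.} Checking coordinatewise gives $u_1+(ac-1)e_1=a\,u_0+(f-ae)e_2$. Since $\rho=ac-1>0$, the diagonals $e_1u_1$ and $e_2u_0$ cross in their interiors exactly when $f-ae>0$. If instead $ae>f$, then $a\,u_0=u_1+\rho e_1+(ae-f)e_2$ puts the projected $u_0$ inside the triangle $e_1e_2u_1$. So $ae-f>0$ is precisely the reflex case, not convexity. A reflex angle at $e_2$ is impossible outright, since $e_2$ is a corner of the outer triangle.

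\textbf{Your primary strategy cannot work.} Convexity depends only on the coordinates, while the heights are arbitrary. So $h_1>ah_0$, $h_0>0$ and zero boundary heights impose no constraint on $a,c,e,f$. Moreover, $u_1$ is never in the cone of $e_1,e_2,u_0$, because its $e_1$-coefficient is $1-ac<0$. The combination you propose to exploit does not exist.

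\textbf{Your fallback is insufficient.} Take $h_0=1$, $h_1=M$ large, and $1/e<h_2<dM$. Then exactly the first cap inequality fails. By cyclic symmetry, any single one of the three can be the only failure. You therefore need every quadrilateral of the cycle to be convex, not merely one of them.

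The missing idea is to combine nonregularity with one orientation inequality from \eqref{eq:octorient} at a time. From $\tau=df-1>0$ you get $f>1/d$, and $ade\le1$ gives $1/d\ge ae$, hence $f>ae$. Cyclically, $\sigma>0$ gives $b>1/e\ge ad$, and $\rho>0$ gives $c>1/a\ge de$.

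Thus all three cross edges $e_2u_0$, $e_0u_1$, $e_1u_2$ of the cycle are convexly exchangeable, including when $C_+=1$. Whichever cap inequality the heights violate, its quadrilateral is convex. With this in place, your Steps~1 and~3 go through as written.
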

\begin{proof}
By Lemma~\ref{lem:octregular}, one cyclic product is at most one.
Index that cycle by distinct $i,j,k$, so that
$U_{ij}U_{jk}U_{ki}\leq1$. We first show that each of its three cross
edges $e_k u_i$ is geometrically exchangeable. The pair inequalities
in \eqref{eq:octorient} give $U_{kj}U_{jk}>1$, hence
\[
 U_{kj}>1/U_{jk}\geq U_{ij}U_{ki},
\]
where the second inequality follows by dividing the cyclic-product
inequality by $U_{jk}$. Put
$\kappa=U_{ji}U_{ij}-1>0$ and
$\nu=U_{kj}-U_{ki}U_{ij}>0$. The coordinate identity
\begin{equation}\label{eq:oct-radon}
 u_j+(U_{ji}U_{ij}-1)e_j
   =U_{ij}u_i+(U_{kj}-U_{ki}U_{ij})e_k
\end{equation}
can be checked coordinate by coordinate: both sides have entries
$U_{ij},U_{ji}U_{ij},U_{kj}$ in coordinates $i,j,k$.

To see the geometry of this positive relation, write
$\widehat u_r=u_r/s_r$ for the actual section point, where
$s_r=\sum_\ell U_{\ell r}$. Taking coordinate sums in
\eqref{eq:oct-radon} gives
$M=s_j+\kappa=U_{ij}s_i+\nu>0$, and division by $M$ gives
\[
 \xi=\frac{s_j\widehat u_j+\kappa e_j}{M}
     =\frac{U_{ij}s_i\widehat u_i+\nu e_k}{M}.
\]
Each expression is a strict convex combination of its two endpoints.
Thus the segments $\widehat u_je_j$ and $\widehat u_ie_k$ cross in
their interiors and are the diagonals of a convex quadrilateral.
The latter is the old diagonal. This verifies geometric exchangeability
for each cyclic choice of $i,j,k$, even when the cyclic product is one.

We now select an exchange using the given heights. Recall that the
normalized heights are $h_r=s_rH_r>0$. The three inequalities
\[
 h_j<U_{ij}h_i,\qquad h_k<U_{jk}h_j,\qquad h_i<U_{ki}h_k
\]
cannot all hold: multiplying and cancelling $h_ih_jh_k>0$ would give
$1<U_{ij}U_{jk}U_{ki}\leq1$. Equality in an individual comparison
would also be impossible. Indeed, the height plane through
$e_j,e_k,u_i$ is $h_iX_i$, so $h_j=U_{ij}h_i$ would place the fourth
lift $u_j$ on that plane, giving a zero circuit value. After a cyclic
reindexing we therefore have $h_j>U_{ij}h_i$.

Replace the old triangles $e_je_ku_i,e_ku_iu_j$ by
$e_je_ku_j,e_ju_iu_j$, which use the new diagonal $e_ju_j$.
Let $g_{\rm old}$ and $g_{\rm new}$ interpolate the same actual target
heights $H_r$ and the zero boundary heights on these two triangulations.
At the diagonal intersection above,
\[
 g_{\rm new}(\xi)-g_{\rm old}(\xi)
       =\frac{h_j-U_{ij}h_i}{M}>0.
\]
Their difference is affine on each of the four triangles obtained by
drawing both diagonals, and vanishes on the quadrilateral boundary.
Each refinement triangle has $\xi$ as one vertex and one side of the
quadrilateral as its opposite edge. Its affine difference is therefore
the positive value at $\xi$ multiplied by the barycentric coefficient
of $\xi$. That coefficient is positive away from the opposite boundary
edge, including the portions of both diagonals inside the quadrilateral.
Consequently the difference is strictly positive throughout the
quadrilateral interior. Outside the quadrilateral the interpolants agree.

The exchange removes $e_ku_i$ and adds $e_ju_j$. Thus $u_i$, whose
old neighbors were $e_j,e_k,u_j,u_k$, now has exactly the three
neighbors $e_j,u_j,u_k$. Remove $u_i$ temporarily in the auxiliary
planar lifting problem. The remaining two interior vertices induce a
forest, so Theorem~\ref{thm:forest} supplies a strict lifting of that
smaller disk. Lemma~\ref{lem:ear} reinserts $u_i$ coherently and proves
strict regularity of the exchanged six-vertex disk. This constructs
some strict lifting; its heights need not be the given target heights.
The auxiliary removal and reinsertion change no spatial point set.
All coefficient and height comparisons used above remain strict in the
cyclic-product equality case, without perturbing the spatial coordinates.
\end{proof}

\begin{theorem}[Six-neighbor preparation]\label{thm:octprep}
Suppose the hull is $qABC$ and $q$ has at most three additional mesh
neighbors. At most sixteen legal geometric flips make $q$
placing-removable. Each move shrinks its geometric star, preserves its
initial exterior tetrahedra, and introduces no new $q$-neighbor.
\end{theorem}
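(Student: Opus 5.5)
The plan is to reduce to Theorem~\ref{thm:radial} in every case, using one preparatory spatial flip when the radial link is an octahedral disk. Since the hull is the tetrahedron $qABC$, the radial section $B$ is the triangle $z_Az_Bz_C$, and $b(q)=3$. Hence $m=m_T(q)\leq6$, and the disk $D$ has at most three interior vertices. If the interior graph is a forest, Theorem~\ref{thm:forest} makes $D$ strictly regular, and Theorem~\ref{thm:radial} reaches placing in at most $\binom64=15$ flips, with all the stated monotonicity properties. The same holds when the interior graph is a triangle $u_0u_1u_2$ but some interior vertex has degree three. In that case Lemma~\ref{lem:ear} removes that vertex, the remaining disk has a forest interior graph, and reinsertion gives strict regularity. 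With three interior vertices forming a triangle and every interior degree at least four, a short count of edges and triangles shows that each $u_i$ misses exactly one boundary vertex, and these missed vertices are distinct. So $D$ is precisely the octahedral disk, with $u_i$ paired with $e_i$. If its two cyclic products exceed one, Lemma~\ref{lem:octregular} gives strict regularity, and again at most $15$ flips suffice.

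The remaining case is a nonregular octahedral disk, where I would use the target heights $\eta$. Subtract the affine function $L$ interpolating $\eta$ on $z_A,z_B,z_C$. This preserves all circuit comparisons, since affine shifts leave circuit values unchanged. By Lemma~\ref{lem:positive-target}, the resulting inner heights are strictly positive and the boundary heights are zero. The circuit values $c\cdot\eta$ are nonzero, as noted in the proof of Theorem~\ref{thm:radial}. Lemma~\ref{lem:octexchange} therefore supplies a geometrically convex diagonal exchange that strictly raises the $\eta$ interpolant, and the exchanged disk is strictly regular. This exchange must then be realized spatially. I would copy the diagonal-exchange step in the proof of Theorem~\ref{thm:radial}: normalize the planar circuit $c$ to be negative on the old diagonal and positive on the new one. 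The increase of the interpolant gives $c\cdot\eta>0$. Then the spatial circuit \eqref{eq:spatialcircuit} has $d_q<0$, and the two old $q$-tetrahedra over the two old triangles form the complete side omitting the two positive labels. Lemma~\ref{lem:circuit} makes the $2\to3$ flip legal. It produces two new $q$-tetrahedra, whose radial link is the exchanged disk, together with one non-$q$ tetrahedron. Thus the geometric star strictly shrinks, no neighbor is added, and no initial exterior tetrahedron is touched.

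After this flip, the radial link is strictly regular and still has $m\leq6$ neighbors. Theorem~\ref{thm:radial} then gives at most $\binom64=15$ further flips, so the total is at most $1+15=16$. The properties of each move come from the preparatory step together with those guaranteed in Theorem~\ref{thm:radial}.

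The main obstacle is the combinatorial classification showing that a non-forest, non-ear disk with triangular boundary and three interior vertices must be octahedral. A second point needs care in the spatial lift: the quadrilateral used for the exchange must consist of exactly two triangles of the current link, namely $e_je_ku_i$ and $e_ku_iu_j$. I would check the classification first by Euler counting: a disk with $3$ boundary and $3$ interior vertices has $12$ edges, of which $3$ lie inside the interior triangle, so interior--boundary edges number $6$. Requiring every interior degree to be at least four forces each $u_i$ to have exactly two boundary neighbors. Planarity then forces the missing pairs to form a perfect matching.
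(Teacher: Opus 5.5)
Your proposal is correct and follows essentially the paper's route. If the link is regular you apply Theorem~\ref{thm:radial}; otherwise you identify the octahedral disk, apply Lemma~\ref{lem:octexchange} to the positive gauged target $\eta-L$, and realize the exchange as the spatial $2\to3$ flip with Radon partition $\{v_1,v_2\}\mid\{q,v_3,v_4\}$, for a total of $1+\binom64=16$ flips. The differences are cosmetic: you reprove the three-interior classification inline instead of citing Lemma~\ref{lem:cores}, and you reuse \eqref{eq:spatialcircuit} instead of rederiving the positive spatial dependence.
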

\begin{proof}
The radial boundary consists of the three hull neighbors $A,B,C$,
so $q$ has at most six mesh neighbors. If the link is regular,
Theorem~\ref{thm:radial} gives at most $\binom64=15$ flips.
Otherwise Lemma~\ref{lem:cores} below shows that it has exactly three
interior vertices and is the octahedral disk.

Lemma~\ref{lem:positive-target} supplies the target $\eta-L$, which
vanishes on the radial boundary and is positive at the interior
neighbors. To check the remaining hypothesis of
Lemma~\ref{lem:octexchange}, take any four-label radial circuit
\[
 \sum_p c_pz_p=0,\qquad \sum_p c_p=0.
\]
If $\sum_p c_p\eta_p=0$, then the nonzero coefficients
$c_p\eta_p$ would sum to zero and satisfy
$\sum_p(c_p\eta_p)p=\sum_p c_pz_p=0$. They would give an affine
dependence of four spatial labels, contradicting general position.
Moreover, because $L$ is affine,
\[
 \sum_p c_p\bigl(\eta_p-L(z_p)\bigr)=\sum_p c_p\eta_p\ne0.
\]
Positive rescaling of the lifted homogeneous representatives preserves
this nonvanishing condition. Lemma~\ref{lem:octexchange} therefore
gives a convex exchange that increases the target and makes the disk
strictly regular.

We next verify that this planar exchange is supported by a spatial
$2\to3$ flip. Name its four spatial labels $v_1,v_2,v_3,v_4$, with
old diagonal $v_3v_4$ and new diagonal $v_1v_2$ in the radial section.
For this calculation abbreviate $z_{v_r}$ and $\eta_{v_r}$ by $z_r$
and $\eta_r$. Normalize \eqref{eq:oct-radon} to
\[
 \alpha_1z_1+\alpha_2z_2=\beta_3z_3+\beta_4z_4,
 \qquad \alpha_1+\alpha_2=\beta_3+\beta_4=1,
\]
with all four coefficients positive. Each side describes the diagonal
intersection. The difference between the new and old interpolated
heights there is positive. Affine gauging cancels from this difference,
so in terms of the raw reciprocals it says
\[
 \delta:=\alpha_1\eta_1+\alpha_2\eta_2
          -\beta_3\eta_3-\beta_4\eta_4>0.
\]
Substitute $z_r=\eta_rv_r$. Since $q=0$, adding $\delta q$ does not
change the vector equality, and makes the coefficient sums equal:
\[
 \alpha_1\eta_1v_1+\alpha_2\eta_2v_2
   =\beta_3\eta_3v_3+\beta_4\eta_4v_4+\delta q.
\]
All coefficients are strictly positive. This is therefore the spatial
affine dependence with Radon partition
$\{v_1,v_2\}\mid\{q,v_3,v_4\}$.

The two old radial triangles belong to the actual link, so their cones
$qv_1v_3v_4$ and $qv_2v_3v_4$ are tetrahedra of the current mesh.
They are precisely the complete circuit side obtained by omitting
$v_2$ and $v_1$, and hence fill the convex five-point support.
No other active label can lie in their union: it would be contained
in an existing simplex without being one of its vertices, contrary
to the full simplicial-complex condition. Lemma~\ref{lem:circuit}
therefore gives the legal replacement
\[
 \{qv_1v_3v_4,qv_2v_3v_4\}
 \longrightarrow
 \{v_1v_2v_3v_4,qv_1v_2v_3,qv_1v_2v_4\}.
\]
It preserves the whole support boundary and every exterior tetrahedron.
The nondegenerate tetrahedron $v_1v_2v_3v_4$ is transferred to the
antistar, so the geometric $q$-star strictly shrinks. All four affected
labels still occur in the new $q$-tetrahedra, so every neighbor is
retained and none is introduced. In particular, the auxiliary planar
removal used to prove regularity has deleted no spatial label.

The new actual link is the strictly regular six-vertex disk supplied
by Lemma~\ref{lem:octexchange}. A radial sweep now uses at most
fifteen further flips, each preserving the preparation's exterior
tetrahedra and the other asserted properties. The total is at most
$1+\binom64=16$ flips to placing.
\end{proof}

\section{Small triangular disks and four-link compatibility}
\label{sec:fourlinks}

When all four interior labels are neighbors of every hull vertex, a
single link need not admit the preceding preparation. The four disks,
however, encode the same spatial faces. We classify only the few forms
that could be nonregular and then compare those shared faces. The
classification concerns auxiliary planar liftings throughout.

\begin{lemma}[Small triangular disks]\label{lem:cores}
A nonregular triangular disk with at most three interior vertices has
exactly three, and is the octahedral disk. A nonregular triangular disk
with four interior vertices has one of five forms: pendant,
annular-stellar, central-stellar, pentagonal-bipyramid core, or sector.
Their combinatorial types alone are not assertions of nonregularity.
\end{lemma}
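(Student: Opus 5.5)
The plan is to classify triangular disks (boundary $e_0e_1e_2$, all other vertices interior) by the graph $G$ induced on the interior vertices. We use only two earlier reductions. Theorem~\ref{thm:forest} says that if $G$ is a forest the disk is strictly regular. Lemma~\ref{lem:ear} says that an interior vertex of degree three can be removed or reinserted without changing strict regularity. So a nonregular disk must have a cycle in $G$, and, by repeated ear removal, it must reduce to a nonregular disk with no interior degree-three vertex, or to a small core with one. Since removals lower the interior count, the classification is by induction on the number $k$ of interior vertices.

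\emph{At most three interior vertices.} For $k\le 2$, $G$ is a forest, so the disk is regular. For $k=3$ a nonregular disk needs $G$ to be the triangle $u_0u_1u_2$. The region between this triangle and the outer triangle is an annulus with no further vertices, triangulated by six triangles (each interior vertex has degree at least three). Consider the number of outer vertices adjacent to each $u_i$, using Euler's formula: $e_D = 3\cdot 6 - 3 - 6 = 9$ edges beyond the outer boundary. If some $u_i$ is joined to all three outer vertices, then its degree is five. We then check directly that some other $u_j$ has degree three, so Lemma~\ref{lem:ear} removes $u_j$ and leaves a disk with a forest interior, which is regular. Otherwise each $u_i$ misses exactly one outer vertex, and distinct $u_i$ miss distinct outer vertices, since the annular triangles alternate. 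This is exactly the octahedral disk with missing edges $e_iu_i$.

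\emph{Four interior vertices.} $G$ has four vertices and contains a cycle, so it is one of the following: a triangle with a pendant vertex, a triangle plus an isolated vertex, a $4$-cycle, a $4$-cycle with one chord, or $K_4$.
\begin{itemize}
\item If some interior vertex has degree three in $D$, remove it by Lemma~\ref{lem:ear}. The result is a three-interior disk, nonregular only if octahedral. The reinsertion places the fourth vertex either in a boundary-adjacent triangle of the octahedral disk or in its central triangle. We name these the \emph{pendant}, \emph{annular-stellar}, and \emph{central-stellar} forms, distinguished by which of the seven triangles receives the new vertex, up to the octahedral symmetry.
\item If no interior vertex has degree three, every interior vertex has degree at least four. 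Euler's formula gives $3k=12$ interior edges and triangle count $2k+1=9$. A short count of boundary–interior edges then constrains $G$.
\begin{itemize}
\item When $G=K_4$ (or a $4$-cycle with a chord) and one interior vertex is surrounded by the other three plus boundary vertices forming a five-cycle, we obtain the \emph{pentagonal-bipyramid core}.
\item When $G$ is a $4$-cycle whose outer annulus is split so that a single outer vertex sees a whole sector of consecutive interior vertices, we obtain the \emph{sector} form.
\end{itemize}
Every remaining combinatorial type either has a forest $G$, or has a degree-three interior vertex whose removal leaves a forest, and is therefore regular.
\end{itemize}

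The main obstacle is the exhaustive enumeration in the degree-at-least-four case without missing a type. I would carry it out by first fixing $G$ (at most five graphs), then listing the admissible patterns of boundary attachments satisfying the degree sums $\sum_{v}\deg v = 2\cdot 15$. The outer vertices' degrees sum to $30-\sum_{\text{interior}}\deg$. Each pattern is then matched to a named form or eliminated via Theorem~\ref{thm:forest} and Lemma~\ref{lem:ear}. As the statement warns, we only need that the five forms exhaust the candidates, not that each one is actually nonregular.
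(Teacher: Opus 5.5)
Your treatment of $k\le 3$ and of the four-interior case with a degree-three vertex matches the paper's argument. In both, Lemma~\ref{lem:ear} removes the degree-three vertex, the remainder is octahedral, and reinsertion subdivides a cap, annular, or central face. Your terse claim that the $u_i$ miss distinct outer vertices holds once you note that each outer vertex needs two inner neighbors: zero makes $e_0e_1e_2$ a face, and one forces that neighbor to see all three outer vertices.

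The gap is the branch in which every interior degree is at least four. This is where the real content of the lemma lies, and there your sketch misidentifies the forms and discards the remaining cases for the wrong reason.
\begin{enumerate}
\item Both the pentagonal-bipyramid core and the sector have interior graph $K_4$ minus an edge; for the sector it is $xy,yz,zw,wx,yw$. Neither comes from $K_4$ or from a chordless $4$-cycle.
\item The leftover graphs are not ``regular by a forest or by an ear removal leaving a forest''. In this branch there is no degree-three vertex at all, so these graphs must be shown not to occur.
\item Your fallback is also false in general. The pendant form has interior graph a triangle with a pendant edge. Removing its degree-three vertex leaves the octahedral disk, which is not a forest and can be nonregular.
\end{enumerate}

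Here is what is missing.
\begin{enumerate}
\item Let $r$ be the number of inner--inner edges, so there are $12-r$ cross edges. The interior degrees then sum to $2r+(12-r)=12+r\ge 16$, forcing $r\ge 4$. This already eliminates forests and the triangle plus an isolated vertex.
\item For $K_4$ ($r=6$), the straight embedding puts one vertex inside the triangle of the other three, and that vertex has degree three.
\item A chordless $4$-cycle bounds a vertex-free quadrilateral, which needs a diagonal.
\item The triangle with a pendant vertex $x$ needs an actual argument. The vertex $x$ must be joined to $A,B,C$, so the inner triangle lies in one sector, say $xAB$. The three-interior octahedral analysis applied in that sector would then give $x$ two inner neighbors, a contradiction.
\item For $r=5$ you must do three things.
\begin{itemize}
\item Show that the two triangles of $K_4$ minus an edge are faces on opposite sides of their shared edge. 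If they were on the same side, nesting would force the enclosed vertex to use the missing edge.
\item Count cross edges. The two nondiagonal vertices need two outer neighbors each and the diagonal ones need one each. The seven cross edges exceed these minima by exactly one, so the cyclic cross-degree list is $(2,2,2,1)$ or $(3,1,2,1)$.
\item Show that each list determines the annular triangulation, for instance by walking around the annulus. This yields exactly the pentagonal-bipyramid core and the sector.
\end{itemize}
\end{enumerate}
Without these steps, your plan of enumerating boundary attachments graph by graph remains an outline rather than a proof.
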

\begin{proof}
Repeatedly remove interior vertices of degree three in the auxiliary
planar disk. Their three incident triangles are replaced by the triangle
on their neighbors. Lemma~\ref{lem:ear} preserves regularity in both
directions, so every remainder of a nonregular disk is nonregular.
A disk with at most two interior vertices is regular by
Theorem~\ref{thm:forest}, since its interior graph is a forest.
Consequently a nonregular disk with at most three interior vertices
already has exactly three and no removable vertex. Its interior graph
must contain a cycle, again by the forest criterion. On three vertices
this is the triangle, which is a face: no boundary vertex can lie
inside it, and there is no further interior label to subdivide it.

Euler's formula for a triangulated disk with triangular boundary and
$k$ interior vertices gives $3k+3$ edges. Indeed, if $F$ counts its
triangular faces and $E$ its edges, Euler gives $(k+3)-E+F=1$;
counting face-edge incidences gives $3F=2E-3$, since just the three
boundary edges have one incident triangle. Eliminating $F$ gives the
claimed edge count. For $k=3$, subtracting the
three outer and three inner edges leaves six boundary--interior, or
\emph{cross}, edges. Each interior vertex already has two inner
neighbors and has total degree at least four, so each has exactly two
boundary neighbors. Each boundary vertex also has exactly two inner
neighbors. Indeed, zero would force the outer triangle to be a face.
If, for example, $A$ had only the inner neighbor $u$, its incident
triangles would be $ABu,ACu$, making $u$ adjacent to all three boundary
vertices, contrary to its cross degree two. All three boundary cross
degrees are therefore at least two, and their sum is six. The missing
cross edges form a perfect matching, which specifies the octahedral
disk.

Consider next a four-interior disk without an interior degree-three
vertex. Let $r$ be the number of inner--inner edges. There are fifteen
edges in all, hence $12-r$ cross edges. Counting contributions to the
sum of the four interior degrees gives
\[
                  2r+(12-r)=12+r\geq16,
\]
so $r\geq4$; simplicity gives $r\leq6$. A simple interior cycle
encloses a region that must be triangulated using interior labels:
boundary vertices lie outside it, and no edge can cross the cycle.

If $r=4$, the interior graph is connected, since a disconnected simple
graph on four vertices has at most three edges. A connected graph on
four vertices with four edges has one cycle. It is therefore either a
chordless four-cycle or a triangle with a pendant edge. The four-cycle
encloses no other label and requires a filling diagonal, contradicting
$r=4$.

In the second case, let $x$ be the pendant vertex. Its single inner
neighbor and total degree at least four force all three cross edges
$xA,xB,xC$. These split the outer triangle into the three sectors
$xAB,xBC,xCA$. The triangle on the other three interior labels is
connected and disjoint from these fan edges, so it lies wholly inside
one sector, say $xAB$. Every edge incident to one of those labels stays
in the closed sector: an edge to a vertex outside it would cross its
boundary. Their degrees in the sector disk thus equal their original
degrees and remain at least four. Apply the preceding three-interior
degree-and-cycle argument to this triangular disk, now with boundary
$x,A,B$. That argument used the inner triangle and degree bounds to
force the octahedral cross pattern. In particular, its boundary vertex
$x$ must have two neighbors among the three sector-interior labels.
This contradicts the single inner neighbor of $x$ in the original
disk, and excludes $r=4$.

If $r=6$, the interior graph is $K_4$. Its noncrossing straight
realization has one vertex inside the triangle of the other three;
four vertices in convex position would give crossing diagonals.
The enclosed vertex cannot have a boundary neighbor, since that edge
would cross the enclosing triangle. Its total degree is therefore
three, contrary to the hypothesis.

We are left with $r=5$, so the interior graph is $K_4$ minus one edge.
Its two three-cycles share an edge. They bound triangular faces on
opposite sides of that edge. To justify the side assertion, suppose
the remaining vertices were on the same side. Noncrossing then forces
one of the two triangles to be nested inside the other: otherwise
their other edges cross. The vertex enclosed by the larger triangle
would have to join all three of its corners to triangulate that region.
One of those joins is the missing edge, a contradiction. Thus the two
faces form a quadrilateral. Write its vertices cyclically as $x,y,z,w$,
with diagonal $yw$.

Every interior vertex needs an outer neighbor. More precisely, $x,z$
have inner degree two and need at least two outer neighbors, while
$y,w$ have inner degree three and need at least one. The seven cross
edges exceed these minima $(2,1,2,1)$ by exactly one. Symmetries of the
quadrilateral preserving its diagonal exchange the two nondiagonal
vertices and exchange the two diagonal vertices. Up to these
symmetries, the cross degrees in cyclic order are therefore
\[
                  (2,2,2,1)\quad\hbox{or}\quad(3,1,2,1).
\]

These lists determine the annular triangulations, not merely their
degree sequences. The region between the inner quadrilateral and outer
triangle has no further vertices or same-boundary diagonals. Each of
its triangles consequently has two cross edges and one boundary edge,
belonging to either the inner or outer boundary. Cut the annulus along
a cross edge and walk from one copy of that edge to the other through
successive triangles. The triangles form one cyclic strip before the
cut: across either cross edge there is a next triangle, and the
annulus is connected. A triangle on an inner boundary edge advances
the inner endpoint of the current cross edge; a triangle on an outer
edge advances its outer endpoint. The walk thus makes four inner
advances and three outer advances, each in the corresponding boundary's
fixed cyclic order.

At an inner vertex with $s$ outer neighbors, its cross edges occur
consecutively in its annular fan. Between arriving at that vertex and
advancing to the next inner vertex, the walk makes exactly $s-1$
outer advances. For the first degree list there is one outer advance
at each of $x,y,z$ and none at $w$. For the second there are two at
$x$, none at $y,w$, and one at $z$. Hence the degree list fixes the
circular advance word. Once its initial cross edge and the outer
boundary labels are chosen, every successive triangle is determined
by the endpoint that advances; changing those choices only relabels
the boundaries.

For clarity, the first walk, beginning at $Ax$, gives
\[
 ABx,\ Bxy,\ BCy,\ Cyz,\ CAz,\ Azw,\ Awx.
\]
The second, beginning at $Bx$, gives
\[
 BCx,\ CAx,\ Axy,\ Ayz,\ ABz,\ Bzw,\ Bwx.
\]
Adding the two inner faces $xyw,yzw$ gives exactly the two last rows
of Table~\ref{tab:diskforms}. In the first row, adding the outer face
$ABC$ gives a pentagonal bipyramid with poles $A,y$ and equatorial
cycle $B,x,w,z,C$. In the second, the triangles $xBC,xCA$ lie outside
$xAB$; the seven remaining triangles form an octahedral disk in that
sector, with inner labels $y,z,w$ and missing cross edges $xz,Aw,By$.

Finally, if the original four-interior disk has a degree-three vertex,
remove it once. The remainder is nonregular by Lemma~\ref{lem:ear}
and has three interior vertices, hence is octahedral by the first
part. Reinsertion subdivides the single face that replaced the removed
vertex. The octahedral disk has three caps, three annular triangles,
and one central triangle. Permuting its three missing
boundary--interior pairs carries any cap to any other cap and any
annular triangle to any other annular triangle. These are its three
face orbits. Subdivision therefore gives respectively the pendant,
annular-stellar, or central-stellar form in the first three table rows.
Together with the two degree-four-or-higher cases, this exhausts the
five forms.
\end{proof}

We give explicit data so that subsequent compatibility statements can
be checked without relying on a picture. With outer labels $A,B,C$ let
\[
 \mathcal O=\{ABc,BCa,CAb,Abc,Bac,Cab,abc\}.
\]
In the first three rows of Table~\ref{tab:diskforms}, the fourth interior
label is $d$. The \emph{cap label} on a boundary edge is the third vertex
of its unique incident triangle. The number $f$ counts triangles whose
three vertices are interior to the disk.

\begin{table}[htbp]
\centering\small
\begin{tabular}{>{\raggedright\arraybackslash}p{.20\linewidth}p{.59\linewidth}c}
\hline
Form & Triangles, or replacement in $\mathcal O$ & $f$\\
\hline
Pendant & Replace $ABc$ by $ABd,Acd,Bcd$ & 1\\
Annular-stellar & Replace $Abc$ by $Abd,Acd,bcd$ & 2\\
Central-stellar & Replace $abc$ by $abd,bcd,cad$ & 3\\
Pentagonal-bipyramid & $ABx,Bxy,BCy,Cyz,CAz,Azw,Awx,xyw,yzw$ & 2\\
Sector & $BCx,CAx,Axy,Ayz,ABz,Bzw,Bwx,xyw,yzw$ & 2\\
\hline
\end{tabular}
\caption{The five potentially nonregular four-interior triangular disks.
The triangle lists specify abstract incidence; any application uses the
actual noncrossing radial realization.}\label{tab:diskforms}
\end{table}

\begin{figure}[htbp]\centering
\begin{minipage}[b]{.32\textwidth}\centering
\begin{tikzpicture}[scale=0.68,every node/.style={font=\small},vertex label/.style={inner sep=1pt,outer sep=0pt}]
\coordinate (A) at (0.00000,0.00000);
\coordinate (B) at (6.00000,0.00000);
\coordinate (C) at (3.00000,5.00000);
\coordinate (a) at (3.70000,2.60000);
\coordinate (b) at (2.30000,2.60000);
\coordinate (c) at (3.00000,1.15000);
\coordinate (d) at (3.00000,0.40000);
\fill[blue!12] (a)--(b)--(c)--cycle;
\draw (A)--(B);
\draw (A)--(C);
\draw (A)--(b);
\draw (A)--(c);
\draw (A)--(d);
\draw (B)--(C);
\draw (B)--(a);
\draw (B)--(c);
\draw (B)--(d);
\draw (C)--(a);
\draw (C)--(b);
\draw (a)--(b);
\draw (a)--(c);
\draw (b)--(c);
\draw (c)--(d);
\fill (A) circle (1.7pt);
\node[vertex label,anchor=north east] at ([xshift=-3pt,yshift=-3pt]A) {$A$};
\fill (B) circle (1.7pt);
\node[vertex label,anchor=north west] at ([xshift=3pt,yshift=-3pt]B) {$B$};
\fill (C) circle (1.7pt);
\node[vertex label,anchor=south] at ([xshift=0pt,yshift=4pt]C) {$C$};
\fill (a) circle (1.7pt);
\node[vertex label,anchor=west] at ([xshift=4pt,yshift=0pt]a) {$a$};
\fill (b) circle (1.7pt);
\node[vertex label,anchor=east] at ([xshift=-4pt,yshift=0pt]b) {$b$};
\fill (c) circle (1.7pt);
\node[vertex label,anchor=east] at ([xshift=-4pt,yshift=2pt]c) {$c$};
\fill (d) circle (1.7pt);
\node[vertex label,anchor=south west] at ([xshift=3pt,yshift=1pt]d) {$d$};
\end{tikzpicture}
\par\vspace{3mm}\small Pendant
\end{minipage}
\begin{minipage}[b]{.32\textwidth}\centering
\begin{tikzpicture}[scale=0.68,every node/.style={font=\small},vertex label/.style={inner sep=1pt,outer sep=0pt}]
\coordinate (A) at (0.00000,0.00000);
\coordinate (B) at (6.00000,0.00000);
\coordinate (C) at (3.00000,5.00000);
\coordinate (a) at (3.70000,2.60000);
\coordinate (b) at (2.30000,2.60000);
\coordinate (c) at (3.00000,1.15000);
\coordinate (d) at (2.10000,1.40000);
\fill[blue!12] (a)--(b)--(c)--cycle;
\fill[blue!12] (b)--(c)--(d)--cycle;
\draw (A)--(B);
\draw (A)--(C);
\draw (A)--(b);
\draw (A)--(c);
\draw (A)--(d);
\draw (B)--(C);
\draw (B)--(a);
\draw (B)--(c);
\draw (C)--(a);
\draw (C)--(b);
\draw (a)--(b);
\draw (a)--(c);
\draw (b)--(c);
\draw (b)--(d);
\draw (c)--(d);
\fill (A) circle (1.7pt);
\node[vertex label,anchor=north east] at ([xshift=-3pt,yshift=-3pt]A) {$A$};
\fill (B) circle (1.7pt);
\node[vertex label,anchor=north west] at ([xshift=3pt,yshift=-3pt]B) {$B$};
\fill (C) circle (1.7pt);
\node[vertex label,anchor=south] at ([xshift=0pt,yshift=4pt]C) {$C$};
\fill (a) circle (1.7pt);
\node[vertex label,anchor=west] at ([xshift=4pt,yshift=0pt]a) {$a$};
\fill (b) circle (1.7pt);
\node[vertex label,anchor=east] at ([xshift=-4pt,yshift=0pt]b) {$b$};
\fill (c) circle (1.7pt);
\node[vertex label,anchor=north] at ([xshift=0pt,yshift=-4pt]c) {$c$};
\fill (d) circle (1.7pt);
\node[vertex label,anchor=east] at ([xshift=-4pt,yshift=2pt]d) {$d$};
\end{tikzpicture}
\par\vspace{3mm}\small Annular-stellar
\end{minipage}
\begin{minipage}[b]{.32\textwidth}\centering
\begin{tikzpicture}[scale=0.68,every node/.style={font=\small},vertex label/.style={inner sep=1pt,outer sep=0pt}]
\coordinate (A) at (0.00000,0.00000);
\coordinate (B) at (6.00000,0.00000);
\coordinate (C) at (3.00000,5.00000);
\coordinate (a) at (3.70000,2.60000);
\coordinate (b) at (2.30000,2.60000);
\coordinate (c) at (3.00000,1.15000);
\coordinate (d) at (3.00000,1.85000);
\fill[blue!12] (a)--(b)--(d)--cycle;
\fill[blue!12] (b)--(c)--(d)--cycle;
\fill[blue!12] (c)--(a)--(d)--cycle;
\draw (A)--(B);
\draw (A)--(C);
\draw (A)--(b);
\draw (A)--(c);
\draw (B)--(C);
\draw (B)--(a);
\draw (B)--(c);
\draw (C)--(a);
\draw (C)--(b);
\draw (a)--(b);
\draw (a)--(c);
\draw (a)--(d);
\draw (b)--(c);
\draw (b)--(d);
\draw (c)--(d);
\fill (A) circle (1.7pt);
\node[vertex label,anchor=north east] at ([xshift=-3pt,yshift=-3pt]A) {$A$};
\fill (B) circle (1.7pt);
\node[vertex label,anchor=north west] at ([xshift=3pt,yshift=-3pt]B) {$B$};
\fill (C) circle (1.7pt);
\node[vertex label,anchor=south] at ([xshift=0pt,yshift=4pt]C) {$C$};
\fill (a) circle (1.7pt);
\node[vertex label,anchor=west] at ([xshift=4pt,yshift=0pt]a) {$a$};
\fill (b) circle (1.7pt);
\node[vertex label,anchor=east] at ([xshift=-4pt,yshift=0pt]b) {$b$};
\fill (c) circle (1.7pt);
\node[vertex label,anchor=north] at ([xshift=0pt,yshift=-4pt]c) {$c$};
\fill (d) circle (1.7pt);
\node[vertex label,anchor=south] at ([xshift=0pt,yshift=4pt]d) {$d$};
\end{tikzpicture}
\par\vspace{3mm}\small Central-stellar
\end{minipage}
\par\vspace{7mm}
\begin{minipage}[b]{.32\textwidth}\centering
\begin{tikzpicture}[scale=0.68,every node/.style={font=\small},vertex label/.style={inner sep=1pt,outer sep=0pt}]
\coordinate (A) at (0.00000,0.00000);
\coordinate (B) at (6.00000,0.00000);
\coordinate (C) at (3.00000,5.00000);
\coordinate (w) at (2.30000,1.50000);
\coordinate (x) at (3.00000,0.65000);
\coordinate (y) at (4.00000,1.90000);
\coordinate (z) at (3.00000,3.30000);
\fill[blue!12] (x)--(y)--(w)--cycle;
\fill[blue!12] (y)--(z)--(w)--cycle;
\draw (A)--(B);
\draw (A)--(C);
\draw (A)--(w);
\draw (A)--(x);
\draw (A)--(z);
\draw (B)--(C);
\draw (B)--(x);
\draw (B)--(y);
\draw (C)--(y);
\draw (C)--(z);
\draw (w)--(x);
\draw (w)--(y);
\draw (w)--(z);
\draw (x)--(y);
\draw (y)--(z);
\fill (A) circle (1.7pt);
\node[vertex label,anchor=north east] at ([xshift=-3pt,yshift=-3pt]A) {$A$};
\fill (B) circle (1.7pt);
\node[vertex label,anchor=north west] at ([xshift=3pt,yshift=-3pt]B) {$B$};
\fill (C) circle (1.7pt);
\node[vertex label,anchor=south] at ([xshift=0pt,yshift=4pt]C) {$C$};
\fill (w) circle (1.7pt);
\node[vertex label,anchor=east] at ([xshift=-4pt,yshift=1pt]w) {$w$};
\fill (x) circle (1.7pt);
\node[vertex label,anchor=north] at ([xshift=0pt,yshift=-4pt]x) {$x$};
\fill (y) circle (1.7pt);
\node[vertex label,anchor=west] at ([xshift=4pt,yshift=0pt]y) {$y$};
\fill (z) circle (1.7pt);
\node[vertex label,anchor=west] at ([xshift=4pt,yshift=2pt]z) {$z$};
\end{tikzpicture}
\par\vspace{3mm}\small Pentagonal-bipyramid core
\end{minipage}
\begin{minipage}[b]{.32\textwidth}\centering
\begin{tikzpicture}[scale=0.68,every node/.style={font=\small},vertex label/.style={inner sep=1pt,outer sep=0pt}]
\coordinate (A) at (0.00000,0.00000);
\coordinate (B) at (6.00000,0.00000);
\coordinate (C) at (3.00000,5.00000);
\coordinate (w) at (3.85000,1.50000);
\coordinate (x) at (3.00000,3.50000);
\coordinate (y) at (2.15000,1.50000);
\coordinate (z) at (3.00000,0.60000);
\fill[blue!12] (x)--(y)--(w)--cycle;
\fill[blue!12] (y)--(z)--(w)--cycle;
\draw (A)--(B);
\draw (A)--(C);
\draw (A)--(x);
\draw (A)--(y);
\draw (A)--(z);
\draw (B)--(C);
\draw (B)--(w);
\draw (B)--(x);
\draw (B)--(z);
\draw (C)--(x);
\draw (w)--(x);
\draw (w)--(y);
\draw (w)--(z);
\draw (x)--(y);
\draw (y)--(z);
\fill (A) circle (1.7pt);
\node[vertex label,anchor=north east] at ([xshift=-3pt,yshift=-3pt]A) {$A$};
\fill (B) circle (1.7pt);
\node[vertex label,anchor=north west] at ([xshift=3pt,yshift=-3pt]B) {$B$};
\fill (C) circle (1.7pt);
\node[vertex label,anchor=south] at ([xshift=0pt,yshift=4pt]C) {$C$};
\fill (w) circle (1.7pt);
\node[vertex label,anchor=west] at ([xshift=4pt,yshift=0pt]w) {$w$};
\fill (x) circle (1.7pt);
\node[vertex label,anchor=west] at ([xshift=4pt,yshift=2pt]x) {$x$};
\fill (y) circle (1.7pt);
\node[vertex label,anchor=east] at ([xshift=-4pt,yshift=0pt]y) {$y$};
\fill (z) circle (1.7pt);
\node[vertex label,anchor=north] at ([xshift=0pt,yshift=-3pt]z) {$z$};
\end{tikzpicture}
\par\vspace{3mm}\small Sector
\end{minipage}
\caption{The five potentially nonregular four-interior disk forms, schematically. All-interior faces are shaded. Straight-line layouts illustrate incidence only; strict regularity is always decided in the actual projected coordinates.}\label{fig:forms}
\end{figure}

Several consequences of these lists will be used. A pendant disk has
an interior triangle with one pendant edge, and its pendant vertex has
total planar degree three. Its three cap labels are distinct, and its
pendant vertex is one of them. The unique interior neighbor of the
pendant is the interior label absent from the current cap triple.
Every $f=2$ form has seven cross edges and every inner vertex has an
outer neighbor. Only the sector has a repeated cap: on $BC,CA,AB$ its
caps are $x,x,z$. Its all-interior faces are $xyw,yzw$; the boundary
vertices $A,B,C$ have respectively $3,3,1$ inner neighbors. The other
$f=2$ forms have distinct caps and boundary cross-degree multiset
$\{3,2,2\}$. The central-stellar form has $f=3$, and no form has
$f>3$.

The auxiliary degree-three equivalence applies to the three stellar
forms. Similarly, the sector is regular exactly when its octahedral
subdisk is regular: restriction proves necessity, while for sufficiency
one first strictly lifts the outer stellar disk at $x$, then inserts a
sufficiently small affine-normalized strict lifting inside $xAB$.
Strict boundary folds persist by openness. No universal regularity
claim about the pentagonal-bipyramid core is needed.

\Needspace{12\baselineskip}
\begin{lemma}[Tetrahedral-hull incidence]\label{lem:hullincidence}
Let $H=\{H_0,H_1,H_2,H_3\}$ be the hull and let $I$ consist of four
interior labels. Write $a_{HI}$ and $b_{II}$ for the numbers of actual
hull--interior and interior--interior edges. Let $t_j$ count tetrahedra
with $j$ interior vertices, and put $\epsilon=t_4\in\{0,1\}$.
Then
\begin{align}
 t&=a_{HI}+b_{II}-3,& t_1&=4,\label{eq:incidence-a}\\
 t_2&=a_{HI}-b_{II}-1+\epsilon,&
 t_3&=2b_{II}-6-2\epsilon.\label{eq:incidence-b}
\end{align}
These are necessary incidence identities, not sufficient conditions
for a geometric realization.
\end{lemma}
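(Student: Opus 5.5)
The plan is pure double counting: one Euler relation and two incidence sums, one over interior vertices and one over hull vertices.

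Since $|P|=8$ with four hull and four interior labels, the mesh has $8$ vertices, $E=6+a_{HI}+b_{II}$ edges (the six hull edges are all present, as any two of the four hull vertices span a hull edge), $F$ triangles and $t$ tetrahedra. The boundary of $\conv(P)$ consists of four triangles. Counting tetrahedron--triangle incidences gives $4t=2F-4$. Euler's relation for a triangulated $3$-ball, $8-E+F-t=1$, then yields $t=E-9=a_{HI}+b_{II}-3$, which is the first identity.

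\textbf{The identity $t_1=4$.} Every three hull vertices span a facet of the tetrahedral hull, since any three of the four are facet vertices. Each hull facet lies in exactly one tetrahedron. Its fourth vertex cannot be the remaining hull vertex, because that tetrahedron would be all of $\conv(P)$ and would contain the interior labels. Hence $t_0=0$ and the tetrahedra with three hull vertices are exactly four, so $t_1=4$. Also $\epsilon=t_4\le1$, because only one tetrahedron can be spanned by the four interior labels.

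\textbf{Two incidence sums.} For the remaining identities I would sum tetrahedral degrees over the two vertex classes.

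For an interior vertex $v$, $\lk_T(v)$ is a triangulated $2$-sphere on its $\deg v$ neighbors. It therefore has $2\deg v-4$ triangles, and this is $\tau_T(v)$. Summing over $I$, and using $\sum_{v\in I}\deg v=a_{HI}+2b_{II}$, gives
\[
 t_1+2t_2+3t_3+4t_4=2a_{HI}+4b_{II}-16 .
\]

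For a hull vertex $H_i$, its link is a triangulated disk with triangular boundary given by the other three hull vertices. Its $k_i$ interior vertices are exactly its interior-label neighbors, so the disk has $2k_i+1$ triangles. This count follows from the Euler computation used in Lemma~\ref{lem:cores}. Summing over the hull, with $\sum k_i=a_{HI}$, gives
\[
 3t_1+2t_2+t_3=2a_{HI}+4 .
\]

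\textbf{Solving the linear system.} With $t_1=4$ and $t_4=\epsilon$, the total count gives $t_2+t_3=a_{HI}+b_{II}-7-\epsilon$. The hull sum gives $2t_2+t_3=2a_{HI}-8$. Subtracting, $t_2=a_{HI}-b_{II}-1+\epsilon$, and then $t_3=2b_{II}-6-2\epsilon$. The interior sum is not needed for this, but it serves as a consistency check: substituting these values reproduces it exactly.

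No step looks hard. The only point requiring care is the justification of $t_0=0$ and $t_1=4$: every triple of hull vertices must be a hull facet, and the lone complementary tetrahedron is excluded by the presence of interior labels. One should also keep the hull edges out of $a_{HI}$ and count them separately in $E$.
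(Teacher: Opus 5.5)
Your proposal is correct and follows the paper's proof essentially step for step. Both use the same Euler relation with $4t=2F-4$ and the same facet argument for $t_0=0$ and $t_1=4$. Both obtain the hull-vertex sum $3t_1+2t_2+t_3=2a_{HI}+4$ from the $2k_q+1$ link triangles, and both use the interior-link sum only as a consistency check.
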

\begin{proof}
The hull has six edges and four faces. For the triangulated ball,
$8-e+f-t=1$ and $4t=2(f-4)+4$, giving
$t=e-9=a_{HI}+b_{II}-3$. A tetrahedron with no interior label would
occupy the whole hull and contain active labels, so $t_0=0$.
Every tetrahedron with three hull labels meets a hull facet, and each
of the four facets has one tetrahedral coface. Thus $t_1=4$.
If $k_q$ is the number of interior neighbors of a hull vertex $q$, its
triangular link has $2k_q+1$ triangles. Counting incidences with hull
vertices yields
\[
          3t_1+2t_2+t_3=\sum_{q\in H}(2k_q+1)=2a_{HI}+4.
\]
Together with $t=t_1+t_2+t_3+\epsilon$, this gives
\eqref{eq:incidence-b}. Counting triangles in the spherical interior
links gives the consistent identity
\[
 t_1+2t_2+3t_3+4\epsilon=2(a_{HI}+2b_{II})-16.
\]
\end{proof}

\begin{theorem}[Compatibility of four hull links]\label{thm:compatibility}
Let $T$ be a full tetrahedralization of eight points with tetrahedral
hull and four interior labels. If all sixteen hull--interior edges
occur, at least one of its four actual hull radial links is strictly regular.
\end{theorem}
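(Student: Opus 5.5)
Suppose, for contradiction, that all four hull radial links are nonregular. With all sixteen hull--interior edges present, $a_{HI}=16$ and each hull vertex $q$ has exactly $k_q=4$ interior neighbors. Each link is therefore a triangular disk with four interior vertices. By Lemma~\ref{lem:cores}, each link must be one of the five forms of Table~\ref{tab:diskforms}. The plan is to read off from each form the count $f$ of all-interior triangles. Such a triangle $xyz$ in the link at $q$ is exactly a tetrahedron $qxyz$ with three interior vertices, so $t_3=\sum_q f_q$. Lemma~\ref{lem:hullincidence} then gives
\[
\sum_{q\in H} f_q=t_3=2b_{II}-6-2\epsilon .
\]
Since $b_{II}\le 6$, we get $t_3\le 6-2\epsilon$, while each $f_q\in\{1,2,3\}$ by the table. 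This leaves only a few admissible distributions $(f_q)_q$ with matching $(b_{II},\epsilon)$. For example, $b_{II}=6$ and $\epsilon=0$ allow total $6$; $b_{II}=5$ allows total $4$; and $b_{II}\le4$ forces $t_3\le2$, which contradicts $t_3\ge4$.

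The next step uses shared faces. Each tetrahedron $qxyz$ has the interior triangle $xyz$ as a face. That triangle has at most two tetrahedral cofaces, and these are among the tetrahedra $q'xyz$ with $q'$ another hull vertex, or the single tetrahedron of $t_4$. Along the way I would also compare the cap labels. The cap on the boundary edge $q'q''$ of the link at $q$ is the interior vertex $x$ of the unique tetrahedron $qq'q''x$ on the hull facet $qq'q''$. The same facet tetrahedron appears as a cap in the links of $q'$ and $q''$. So the four facet tetrahedra $t_1=4$ determine all twelve cap entries consistently, with each facet's interior label shared by its three hull vertices. The table structure then constrains each link: a pendant link has three distinct caps, one of which is the pendant vertex; a sector link has the cap pattern $x,x,z$; and the other forms have distinct caps.

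The main case analysis would run over $b_{II}\in\{5,6\}$. When $b_{II}=6$ and $\epsilon=0$, the interior $K_4$ has four triangles and $t_3=6$, so some interior triangle is used twice. The central-stellar form ($f=3$) forces the interior $K_4$ to appear with $d$ enclosed in the radial picture from that $q$. I would use the radial order from two different hull vertices to derive a contradiction: the four interior points cannot project with one inside the others' triangle from every vertex having a large $f$. When $b_{II}=5$, $t_3=4-2\epsilon$, so every $f_q=1$, meaning all four links are pendant (or $\epsilon=1$, which forces an impossible sum). All four links would then be pendant, and the pendant structure says the pendant vertex is a cap label adjacent to only one interior label. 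I would check this against the consistent facet-cap assignment and against $b_{II}=5$, where the missing interior edge must be the pendant's non-adjacency in all four links simultaneously. I would aim to show that this forces a single interior vertex to be pendant everywhere and hence to lie in four cap tetrahedra. That would give it interior degree one in its spherical link, which is impossible because an interior point's link is a sphere in which it has at least three neighbors.

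I expect the hardest step to be the $b_{II}=6$, $t_3=6$ distributions, such as $(3,1,1,1)$ or $(2,2,1,1)$. There the combinatorics alone may be consistent, and geometric input is needed. In that case I would apply Lemma~\ref{lem:octregular} to the octahedral subdisks, using the same spatial points seen from different pivots. The aim is to show that the cyclic products $C_\pm\le1$ cannot hold at every pivot simultaneously, much as Lemma~\ref{lem:positive-target} converts radial data into spatial barycentric signs.
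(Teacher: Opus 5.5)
Your opening agrees with the paper. You use the identity $\sum_q f_q=t_3=2b_{II}-6-2\epsilon$ from Lemma~\ref{lem:hullincidence}, the bounds $1\le f_q\le 3$, and the fact that the cap opposite $H_j$ is the same apex $c_j$ in every link. You omit the case $b_{II}=6$, $\epsilon=1$, where again every $f_q=1$; your all-pendant plan relies on $b_{II}=5$ and does not reach it.

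After that, the patterns $(1,1,1,1)$, $(3,1,1,1)$ and $(2,2,1,1)$ are handled only by stated intentions, and those would not work:
\begin{itemize}
\item In $(3,1,1,1)$ only one pivot has $f=3$, so there is no second ``large-$f$'' radial picture to compare.
\item In the all-pendant case, the interior edge missing from $T$ only forces each pendant to be one of that edge's two endpoints. It does not force a single vertex to be pendant at every pivot.
\item For $(2,2,1,1)$ you give no relation between the products $C_\pm$ computed in different radial projections. Applying Lemma~\ref{lem:octregular} at several pivots is therefore a hope, not a mechanism.
\end{itemize}
Your expectation that geometric input is needed is also mistaken: the remaining contradictions are purely combinatorial.

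The missing idea is a second shared-face compatibility besides the caps. For each hull edge $H_iH_j$, the set $\{x\in I:H_iH_jx\in T\}$ is both the inner-neighbor row of $H_j$ in the $H_i$-link and the row of $H_i$ in the $H_j$-link. Take a pendant link at $H_i$ whose pendant $c_{s(i)}$ occupies the cap opposite $H_{s(i)}$. Its rows miss exactly $\{c_j\}$ for $j\ne s(i)$ and $\{c_i,c_{s(i)}\}$ for $j=s(i)$. With three pendant pivots the four caps are distinct, and comparing any two pendant pivots forces $s(i)=j$ and $s(j)=i$. Hence at most two pivots are pendant. This excludes every pattern with three entries equal to one, including both all-pendant cases.

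In $(2,2,1,1)$, call the pendant pivots $H_0,H_1$; there are two subcases.
\begin{itemize}
\item If the caps are distinct, then $s(0)=1$ and $s(1)=0$, so in the $H_2$-disk the rows at $H_0$ and $H_1$ are both $I\setminus\{c_2\}$. An $f=2$ form has seven cross edges and every inner vertex has an outer neighbor. Hence $H_2H_3c_2$ is the unique face on $H_2H_3$, while the $H_3$-disk likewise yields $H_2H_3c_3$, contradicting $c_2\ne c_3$.
\item If $c_0=c_1$, both $f=2$ disks have a repeated cap, so both are sectors. They share the all-interior face $wxy$, which thereby already has its two cofaces. Each pendant label must then be $w$ or $x$. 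That produces a hull edge carrying two faces in the pendant disk but three in a sector disk.
\end{itemize}
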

\begin{proof}
Suppose all four are nonregular. Let $f_i$ count all-interior triangles
in the radial link of $H_i$. Each such triangle gives one tetrahedron
with exactly one hull label, so
\begin{equation}\label{eq:fcount}
                  \sum_{i=0}^3 f_i=t_3=2b_{II}-6-2\epsilon.
\end{equation}
By Lemma~\ref{lem:cores}, $1\leq f_i\leq3$. If $\epsilon=1$, then
$b_{II}=6$ and all $f_i=1$. If $\epsilon=0$, then $b_{II}\geq5$;
$b_{II}=5$ again makes every $f_i=1$. For $b_{II}=6$, the only
patterns are $(3,1,1,1)$ and $(2,2,1,1)$, up to permutation.
We now exclude these possibilities by comparing actual spatial faces.

For each $i$, let $c_i\in I$ be the apex of the unique tetrahedron on
the hull facet opposite $H_i$. For distinct $i,j$, write $H_k,H_l$ for
the other two hull labels. The triangle $H_kH_lc_j$ in the $H_i$ link
cones to the actual tetrahedron $H_iH_kH_lc_j$, whose hull facet is
opposite $H_j$. Hence the cap opposite the boundary vertex $H_j$ in
that link is $c_j$.

The same correspondence applies to edges of a link: $x$ is an interior
neighbor of its boundary vertex $H_j$ at pivot $H_i$ if and only if
$H_iH_jx$ is a triangular face of $T$. At pivot $H_j$, the test is again
membership of that very same face $H_iH_jx$. Thus the two links give
identical subsets of the four original interior labels, regardless of
the names used to display their disk forms.

\emph{Three pendant links cannot coexist.}
If three pivots have pendant links, all four labels $c_i$ are distinct.
Indeed, each pendant cap triple is distinct, and every pair $c_i,c_j$
belongs to the cap triple of at least one of the three pivots.
For a pendant pivot $H_i$, its cap triple is $\{c_j:j\ne i\}$, so the
interior label omitted from that triple is $c_i$. It is the unique
interior neighbor of the pendant vertex. Write the pendant as
$c_{s(i)}$, where $s(i)\ne i$.

At the boundary vertex $H_{s(i)}$, the old octahedral disk lacked its
opposite cap $c_i$; the new pendant is also absent from that vertex.
At any other boundary vertex $H_j$, the old missing neighbor $c_j$
remains missing, and the pendant is present. Consequently the absent
spatial faces $H_iH_jx$, for $x\in I$, have exactly the label sets
\begin{equation}\label{eq:pendantmissing}
 \{x\in I:H_iH_jx\notin T\}=
 \begin{cases}
   \{c_i,c_j\},&j=s(i),\\
   \{c_j\},&j\ne s(i).
 \end{cases}
\end{equation}
\begin{figure}[htbp]\centering
\begin{tikzpicture}[>=Stealth,font=\small]
\node[draw,rounded corners,align=center,text width=4.1cm,inner sep=9pt] (left)
{Link at $H_i$\\boundary vertex $H_j$\\missing inner neighbors\\$M_i(j)$};
\node[draw,rounded corners,align=center,text width=4.1cm,inner sep=9pt,right=3.4cm of left] (right)
{Link at $H_j$\\boundary vertex $H_i$\\missing inner neighbors\\$M_j(i)$};
\draw[<->,thick] (left)--node[above=3pt,inner sep=1pt] {$H_iH_jx$} node[below=4pt,align=center,text width=2.8cm,inner sep=1pt] {same spatial\\face} (right);
\node[below=9mm of left,align=center] {$M_i(j)=\{c_j\}$ if $j\ne s(i)$\\[3pt]
$M_i(s(i))=\{c_i,c_{s(i)}\}$};
\node[below=9mm of right,align=center] {$M_j(i)=\{c_i\}$ if $i\ne s(j)$\\[3pt]
$M_j(s(j))=\{c_j,c_{s(j)}\}$};
\end{tikzpicture}
\caption{Pendant cap compatibility for distinct $c_0,c_1,c_2,c_3$.
Here $c_k$ is the actual apex on the hull facet opposite $H_k$, and
$c_{s(i)}$ is the pendant at pivot $H_i$. Equality of the missing-face
sets forces $s(i)=j$ and $s(j)=i$. Three pendant pivots cannot pair in
this way. This schematic compares actual shared simplices.}
\label{fig:caps}
\end{figure}

If $H_i,H_j$ are both pendant pivots, their descriptions concern the
same four possible triangular faces $H_iH_jx$. With neither arrow
$s(i)=j$, $s(j)=i$, the two singleton sets differ; with just one arrow
the set sizes differ. Thus both arrows must occur. Three pendant
pivots cannot be pairwise paired in this way. This rules out all
patterns with at least three entries equal to one.

\emph{The residual pattern is $(2,2,1,1)$.}
Let $H_0,H_1$ be the pendant pivots. Distinctness of their two cap
triples implies that $c_2,c_3$ differ from each other and from both
$c_0,c_1$. Thus either all four caps are distinct or $c_0=c_1$.

If all are distinct, \eqref{eq:pendantmissing} forces
$s(0)=1$ and $s(1)=0$. Along either spatial hull edge $H_0H_2$ or
$H_1H_2$, the present triangular faces are exactly those indexed by
$I\setminus\{c_2\}$. In the $H_2$ disk, both boundary vertices
$H_0,H_1$ therefore have three inner neighbors and miss $c_2$.
An $f=2$ disk has seven cross edges, and each inner label has an outer
neighbor. The two rows at $H_0,H_1$ already account for six of those
edges, leaving exactly one at $H_3$. Both rows omit $c_2$, so this
remaining neighbor must be $c_2$; otherwise $c_2$ would have no outer
neighbor at all. Thus the unique triangular
face $H_2H_3x$ is $H_2H_3c_2$. The identical argument at $H_3$ makes
it $H_2H_3c_3$. These are incompatible descriptions of the same
spatial faces since $c_2\ne c_3$.

It remains to consider $c_0=c_1=v$. Put $c_2=w,c_3=x$, and denote the
fourth interior label by $y$. The two $f=2$ disks have repeated cap
labels, so both are sector forms. At $H_2$ the caps indexed $0,1,3$
are $v,v,x$, and its all-interior faces are $vwy,wxy$. At $H_3$ the
caps are $v,v,w$, and its all-interior faces are $vxy,wxy$.
To read these from the sector row, its repeated cap is one vertex of
the first interior triangle, its single cap is one vertex of the second,
and the two labels absent from its caps form their common edge.
Interchanging those last two labels does not change the pair of faces.
Thus the actual triangle $wxy$ already has the two tetrahedral cofaces
$H_2wxy,H_3wxy$. Neither pendant pivot can have pendant label $v$,
because its only all-interior triangle would then be $wxy$, supplying
a third coface. Its pendant is a current cap, hence is not $y$ either.
Each pendant label is therefore $w$ or $x$.

If the $H_0$ pendant is $w$, it was inserted into the cap opposite
boundary vertex $H_2$. Its unique inner neighbor is the missing cap
label $y$. At that boundary vertex both $w$ and $y$ are absent, so
exactly two faces $H_0H_2z$, $z\in I$, occur. If its pendant is $x$,
exactly two faces $H_0H_3z$ occur instead. But in both sector disks,
the boundary vertices $H_0,H_1$ each have three inner neighbors.
Indeed, at pivot $H_2$ the repeated-cap edges are $H_1H_3,H_0H_3$,
so their meeting vertex $H_3$ is the one-neighbor boundary vertex of
the sector. At pivot $H_3$ the repeated-cap edges are
$H_1H_2,H_0H_2$, and that exceptional vertex is $H_2$ instead.
In both cases the rows at $H_0,H_1$ therefore have size three.
The disk at $H_2$ or $H_3$ gives three faces for
the same spatial hull edge that the pendant disk gives two. This last
contradiction excludes the repeated-cap case and completes the proof.
\end{proof}

\section{Completion of the eight-point theorem}
\label{sec:completion}
\begin{corollary}[Eight-point placing bound]\label{cor:35}
Every full tetrahedralization of eight points in the stated general
position reaches placing form at some hull vertex in at most $35$ legal
flips.
\end{corollary}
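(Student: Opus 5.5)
We argue by cases on the number $h$ of hull vertices of the eight-point set $P$.

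\emph{Case $h\geq5$.} Theorem~\ref{thm:hfiveplus} already gives a hull vertex with a strictly regular actual radial link and $m_T(q)\leq7$. Theorem~\ref{thm:radial} then reaches placing in at most $\binom74=35$ flips.

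\emph{Case $h=4$.} The hull is a tetrahedron $qABC$ and there are four interior labels. We split on hull--interior adjacency.

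First suppose some hull vertex $q$ is adjacent to at most three of the four interior labels. Its radial link has boundary $A,B,C$ and at most three interior vertices, so $m_T(q)\leq6$. If that link is strictly regular, Theorem~\ref{thm:radial} gives at most $\binom64=15$ flips. Otherwise Theorem~\ref{thm:octprep} applies directly: by Lemma~\ref{lem:cores} the link is octahedral, and one positive-target exchange followed by a sweep gives at most $16$ flips. Both bounds are below $35$.

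Now suppose every hull vertex is adjacent to all four interior labels, so all sixteen hull--interior edges occur. Theorem~\ref{thm:compatibility} supplies a hull vertex whose actual radial link is strictly regular. That vertex has exactly $3+4=7$ mesh neighbors, so Theorem~\ref{thm:radial} bounds the number of flips by $\binom74=35$.

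The two sub-cases are exhaustive: either some hull vertex misses an interior label, or all sixteen edges are present. Any interior label adjacent to $q$ projects strictly inside the triangular section, so $m_T(q)=3+i_T(q)$ with $i_T(q)\leq4$.

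The one point needing care is that $h$ ranges over $\{4,\dots,8\}$ for eight points in general position, and every value is covered. For $h=5$ and $h=6$ the chosen vertex may have $m_T(q)=7$, and the bound is still $\binom74$ because $m_T(q)\leq n-1=7$ always holds.

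There is no real obstacle here. The corollary only assembles the bounds already proved, and the maximum over all cases is $35$.
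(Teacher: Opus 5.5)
Your proposal is correct and follows the paper's proof essentially step for step. For $h\geq5$ you use Theorem~\ref{thm:hfiveplus} with the $\binom74=35$ sweep bound. For a tetrahedral hull you split between a pivot missing an interior neighbor (Theorem~\ref{thm:octprep}, at most $16$ flips) and complete hull/interior adjacency (Theorem~\ref{thm:compatibility} followed by Theorem~\ref{thm:radial}, at most $35$ flips).
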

\begin{proof}
Write $h$ for the number of hull vertices. For $h\geq5$,
Theorem~\ref{thm:hfiveplus} supplies an actual strictly regular radial
link. Its pivot has at most seven mesh neighbors, so
Theorem~\ref{thm:radial} gives at most $\binom74=35$ flips.
If $h=4$ and some hull vertex has at most three interior neighbors,
Theorem~\ref{thm:octprep} gives at most $1+\binom64=16$ flips (at most
$15$ when its initial link is regular). Otherwise all sixteen
hull/interior edges occur. Theorem~\ref{thm:compatibility} supplies a
strictly regular link with seven neighbors, again giving $35$.
Every possible hull distribution appears in Table~\ref{tab:hulls}.
\end{proof}

\begin{table}[htbp]\centering\small
\begin{tabular}{@{}ccp{7.3cm}r@{}}
\toprule
Hull & Interior & Mechanism at eight points & To placing\\
\midrule
8 & 0 & Hull degree at least five; forest link & $\leq35$\\
7 & 1 & Hull degree at least five; forest link & $\leq35$\\
6 & 2 & Forced octahedral hull under all-nonregularity;
auxiliary universal-ear contradiction & $\leq35$\\
5 & 3 & Equatorial shared-triangle contradiction & $\leq35$\\
4 & 4 & A pivot misses an interior neighbor: positive-target
preparation and a six-neighbor sweep & $\leq16$\\
4 & 4 & Complete hull/interior adjacency: four-link compatibility
and a seven-neighbor sweep & $\leq35$\\
\bottomrule
\end{tabular}
\caption{All eight-point hull distributions. The first four rows use
Theorems~\ref{thm:forest}, \ref{thm:hfiveplus}, and \ref{thm:radial};
the last two use Theorems~\ref{thm:octprep} and
\ref{thm:compatibility}, respectively, followed by \ref{thm:radial}.
The two last rows partition the tetrahedral-hull case.}
\label{tab:hulls}
\end{table}

\begin{proof}[Proof of Theorem~\ref{thm:main}]
Theorem~\ref{thm:seven} supplies full connectivity for every configuration
of at most seven points by radial placing induction, starting with the
single tetrahedron on four points. Let $|P|=8$ and choose any full
triangulation $T$. Corollary~\ref{cor:35} connects $T$ to a placing
triangulation at some hull vertex $q$. Its antistar is a full
triangulation of the seven-point convex hull
$\conv(P\setminus\{q\})$. By Theorem~\ref{thm:seven} and
Lemma~\ref{lem:placing}, $T$ is connected to a full regular triangulation
of $P$. Lemma~\ref{lem:regular} connects all these regular endpoints,
even when different starts chose different pivots. All steps are full
geometric $2\leftrightarrow3$ flips on the original coordinates.
\end{proof}
The additional antistar path is not counted by Corollary~\ref{cor:35}.
The weak unit-pivot height at placing is useful for identifying its
convex geometry; the explicit regular extension, rather than strictness
of that weak height, completes the connectivity proof.

\section*{Acknowledgements}
This paper was produced in collaboration with GPT 6 Astra using ultra reasoning, and all generated work was checked by the author.
\begingroup\small
\bibliographystyle{plain}
\bibliography{references}

\begin{thebibliography}{1}

\bibitem{AS2000}
Miguel Azaola and Francisco Santos.
\newblock The graph of triangulations of a point configuration with {$d+4$}
  vertices is 3-connected.
\newblock {\em Discrete \& Computational Geometry}, 23(4):489--536, 2000.

\bibitem{DRS2010}
Jes{\'u}s~A. De~Loera, J{\"o}rg Rambau, and Francisco Santos.
\newblock {\em Triangulations: Structures for Algorithms and Applications},
  volume~25 of {\em Algorithms and Computation in Mathematics}.
\newblock Springer, Berlin, Heidelberg, 2010.

\bibitem{DSU1999}
Jes{\'u}s~A. De~Loera, Francisco Santos, and Jorge Urrutia.
\newblock The number of geometric bistellar neighbors of a triangulation.
\newblock {\em Discrete \& Computational Geometry}, 21(1):131--142, 1999.

\bibitem{PL2007}
Lionel Pournin and Thomas~M. Liebling.
\newblock Constrained paths in the flip-graph of regular triangulations.
\newblock {\em Computational Geometry}, 37(2):134--140, 2007.

\bibitem{Santos2006}
Francisco Santos.
\newblock Geometric bistellar flips: The setting, the context and a
  construction.
\newblock In Marta Sanz-Sol{\'e}, Javier Soria, Juan~Luis Varona, and Joan
  Verdera, editors, {\em Proceedings of the International Congress of
  Mathematicians 2006}, volume III, pages 931--962. European Mathematical
  Society, 2006.

\end{thebibliography}
\endgroup
\end{document}